\documentclass[11pt]{elsarticle}
\usepackage{amsmath,amssymb,amsthm}
\usepackage{color}
\usepackage[hmargin=1in]{geometry}
\usepackage{graphicx,subfigure}
\usepackage{natbib}
\usepackage{setspace}
\onehalfspacing

\newcommand{\s}{\Xi_{\mu,\sigma}^r}

\setcounter{MaxMatrixCols}{10}

\newcommand{\PP}{{\mathord{I\kern -.33em P}}}
\newcommand{\EE}{{\mathord{I\kern -.33em E}}}
\newcommand{\RR}{{\mathord{I\kern -.33em R}}}
\def\P{\mathbb{P}} 
\def\E{\EE} 
\def\Q{{\mathbb Q}} 
\def\1{{\mathbf 1}} 
\def\F{{\mathcal F}} 

\def\tdk{{\tau_D(k)}}
\def\tuk{{\tau_U(k)}}

\addtolength{\voffset}{-1.0cm}\addtolength{\textheight}{3.4cm}
\newtheorem{theorem}{Theorem}[section]

\newtheorem{lemma}[theorem]{Lemma}

\newtheorem{proposition}[theorem]{Proposition}
\theoremstyle{remark}
\newtheorem{remark}[theorem]{Remark}


\journal{}
\begin{document}
\begin{frontmatter}
\title{Stochastic Modeling and Fair Valuation of Drawdown Insurance}

\author[gcma]{Hongzhong~Zhang\corref{cor1}}
\ead{hzhang@stat.columbia.edu}
\cortext[cor1]{Corresponding author}

\author[ieor]{Tim~Leung}
\ead{leung@ieor.columbia.edu}

\author[bk]{Olympia~Hadjiliadis}
\ead{ohadjiliadis@brooklyn.cuny.edu}

\address[gcma]{Department of Statistics, Columbia University, 1255 Amsterdam Avenue, New York, NY 10027}
\address[ieor]{Department of Industrial Engineering \& Operations Research, Columbia University, New York, NY 10027}
\address[bk]{Department of Mathematics, Brooklyn College and the Graduate Center, C.U.N.Y. Brooklyn, NY 11209}

\begin{abstract} This paper studies the stochastic modeling of  market drawdown events and the fair valuation of insurance contracts based on drawdowns. We model the asset drawdown process as the current relative distance from the historical maximum of the asset value.  We first consider a vanilla insurance contract whereby  the protection buyer pays a constant premium over time to  insure against a drawdown  of a  pre-specified level. This leads to the analysis of the conditional Laplace transform of the drawdown time, which will serve as the building block for   drawdown insurance with early cancellation or drawup contingency. For the cancellable drawdown insurance, we  derive  the investor's optimal cancellation timing in terms of a two-sided first passage time of the underlying drawdown process. Our   model can also be applied to insure against a drawdown by a defaultable stock. We provide analytic formulas for the fair premium and  illustrate the impact of default risk.
\end{abstract}

\begin{keyword}
 Drawdown insurance; Early cancellation; Optimal stopping; Default risk. \\
 JEL subject classification: C61, G01, G13, G22.
  \end{keyword}

\end{frontmatter}

\section{Introduction}The recent financial crisis has been marked with series of  sharp falls in asset  prices triggered by, for example, the S\&P  downgrade of US debt, and default speculations of European countries.  Many individual and institutional investors are wary of large market drawdowns  as  they not only    lead to portfolio losses and liquidity shocks, but also   indicate potential  imminent  recessions.  As is well known,  hedge fund managers  are typically compensated based on the fund's outperformance over the last record maximum, or the high-water mark (see \cite{Agarwal09,Goetzmann,GrosZhou,Sorn}, among others).  As such, drawdown events can directly affect the manager's income. Also, a major drawdown may also trigger a surge in fund redemption by investors, and lead to the manger's job termination. Hence, fund managers have strong incentive to seek insurance against drawdowns.

These market phenomena have motivated the application of drawdowns as path-dependent risk measures, as discussed in \cite{MagdAtiy}, \cite{PospVece}, among others. On the other hand,  Vecer \cite{Vece06, Vece07} argues that some market-traded contracts, such as vanilla and lookback puts, ``have only limited ability to insure the market drawdowns." He studies through simulation the returns of   calls and puts written on the underlying asset's maximum drawdown, and discusses dynamic trading strategies to hedge against a drawdown associated with a single   asset or index.   The recent work  \cite{CarrZhanHaji} provides non-trivial  static strategies using market-traded barrier digital options  to  approximately synthesize  a European-style digital option  on a drawdown event. These observations suggest that drawdown protection  can be useful for both institutional and individual investors, and there is an interest  in synthesizing  drawdown  insurance.

%

In the current paper,  we discuss the  stochastic modeling of drawdowns and  study the  valuation  of a number of insurance contracts against drawdown events. More precisely, the drawdown process  is defined as the current relative   drop of an asset value  from its historical maximum. In its simplest form, the drawdown insurance involves a continuous  premium payment by the investor (protection buyer) to insure a drawdown of an underlying asset value to  a pre-specified level.

 In order to provide the investor with more flexibility in managing  the path-dependent drawdown risk,  we incorporate the right to terminate the contract early. This  early cancellation feature  is similar to the surrender right  that arises in many common insurance products such as equity-indexed annuities (see e.g. \cite{Cheung2005}, \cite{Moore2009}, \cite{MooreYoung1}). Due to the timing flexibility, the investor may stop the premium payment if he/she finds that a drawdown is unlikely to occur (e.g. when the underlying price continues to rise).  In our analysis, we rigorously show that the investor's optimal cancellation timing is based on a non-trivial first passage time of the underlying drawdown process. In other words, the investor's  cancellation strategy and valuation of the contract will depend not only on current value of the underlying asset, but also its distance from the historical maximum. Applying the theory of optimal stopping as well as analytical properties of drawdown processes, we  derive the optimal cancellation threshold  and illustrate it through numerical examples.

Moreover, we consider a related insurance contract that protects the investor from a  drawdown  preceding a drawup.  In other words, the insurance contract  expires early if a drawup event occurs prior to a drawdown. From the investor's perspective, when  a drawup is realized, there is little  need to insure against a drawdown. Therefore, this drawup contingency automatically stops the premium payment and is an attractive feature that  will potentially reduce the  cost of drawdown insurance.

 Our model can also readily extended to incorporate the default risk associated with the underlying asset. To this end, we observe that a  drawdown can be triggered by  a continuous price movement as well as a jump-to-default event. Among other results, we provide  the formulas for the fair premium of the drawdown insurance, and analyze the impact of default risk on the valuation of drawdown insurance.



In existing literature, drawdowns also arise in a number of financial applications. Pospisil and Vecer \cite{PospVece} apply PDE methods to investigate the sensitivities of portfolio values and hedging strategies with respect to drawdowns and drawups.   Drawdown processes have also been incorporated into trading constraints for portfolio optimization (see e.g. \cite{GrosZhou,CvitKara,ChekUryaZaba}). Meilijson \cite{Meil} discusses  the role of  drawdown  in  the  exercise time for a certain look-back American put option.  Several     studies focus on some related concepts of drawdowns, such as maximum drawdowns \cite{DSY, MagdAtiy,Vece06,Vece07}, and speed of market crash \cite{ZhanHadjSMC}.
 On the other hand, the statistical modeling of drawdowns and drawups is also of practical importance, and we refer to the recent studies \cite{Leal,Johansen,RebonatoGaspari}, among others.

 For our valuation problems, we often work with  the joint law of drawdowns and drawups. To this end,  some related formulas from \cite{HadjVece}, \cite{PospVeceHadj}, \cite{ZhanHadj},  and \cite{Zhan13} are useful.   Compared to the existing literature and our prior work, the current paper's contributions are  threefold. First, we derive  the fair premium   for insuring  a number of drawdown events,  with both finite and infinite maturities, as well as new provisions like  drawup contingency and early termination.  In particular, the early termination option leads to the analysis of a new optimal stopping problem (see Section \ref{sect-cancellableinsurance}). We rigorously solve for the optimal termination  strategy, which can be expressed in terms of first passage time of a drawdown process. Furthermore, we incorporate the underlying's default risk -- a feature absent in other related studies on drawdown --  into our analysis,  and  study its impact on the drawdown insurance premium.

 The paper is structured as follows.  In Section   \ref{sect-formulation}, we describe a stochastic model for drawdowns and drawups, and formulate the valuation of a vanilla drawdown insurance. In Sections \ref{sect-cancellableinsurance} and \ref{sect-drawuptoo},  we study, respectively,  the cancellable  drawdown insurance and drawdown insurance with drawup contingency. As extension, we discuss the valuation of  drawdown insurance  on a defaultable stock in Section \ref{sect-def}.  Section \ref{sect-conclude} concludes the paper. We include the proofs for a number of lemmas in Section \ref{sect_proofs}.

\section{Model for Drawdown Insurance}\label{sect-formulation}

We fix a complete filtered probability space $(\Omega, \F, (\F_t)_{t\ge 0}, \Q)$ satisfying the usual conditions. The risk-neutral pricing measure $\Q$ is used for our valuation problems.  Under the measure $\Q$, we model a risky asset $S$ by  the geometric Brownian motion \begin{eqnarray}\label{GBM} \frac{d S_t}{S_t} = r dt + \sigma dW_t\end{eqnarray}
where $W$ is  a standard Brownian motion under $\Q$ that generates the filtration  $(\F_t)_{t\ge 0}$.

Let us denote $\overline{S}$ and $\underline{S}$, respectively,  to be the processes for the running maximum and running minimum of $S$. When writing the contract, the  insurer may use the historical  maximum $\overline{s}$  and  minimum $\underline{s}$ recorded from a prior reference period.  Consequently, at the time of contract inception, the reference  maximum $\overline{s}$, the reference minimum $\underline{s}$ and the stock price need not coincide. This is illustrated in Figure \ref{fig:SPX}.

The running maximum and running minimum processes associated with $S$ follow\footnote{Herein, we denote $a\vee b=\max(a,b)$ and $a\wedge b=\min(a,b)$.},
\begin{eqnarray}\overline{S}_t=\overline{s}\vee\Big(\sup_{s\in[0,t]}S_s\Big),\quad \underline{S}_t=\underline{s}\wedge\Big(\inf_{s\in[0,t]}S_s\Big). \end{eqnarray}
We define the stopping times
\begin{equation}
 \varrho_D(K)=\inf\{t \geq 0 : \overline{S}_t/{S_t}    \geq K \}\quad \text{ and }  
 \quad \varrho_U(K)=\inf\{t \geq 0 : {S_t}/{\underline{S}_t}\ge K\}\label{tauDkr},
\end{equation}
respectively as the first times that $S$ attains a \emph{relative drawdown}  of $K$ units and a \emph{relative drawup} of $K$ units. Without loss of generality, we assume that $1\le\overline{s}/\underline{s}<K$ so that $\varrho_D(K)\wedge\varrho_D(K)>0$, almost surely.

To facilitate our analysis, we shall work with log-prices. Therefore, we define $X_t = \log S_t$ so that
\begin{eqnarray}\label{DBM}X_t=x + \mu t+\sigma W_t,\end{eqnarray} where $x = \log S_0$ and  $\mu =  r - \frac{\sigma^2}{2}$.
Denote by $\overline{X}_t=\log\overline{S}_t$ and $\underline{X}_t=\log\underline{S}_t$ to be, respectively, the  running maximum and running minimum of the log price process.  Then, the  relative drawdown and drawup of $S$ are equivalent to  the absolute drawdown and drawup of the log-price $X$, namely,
\begin{equation}
  \tau_D(k)=\inf\{t \geq 0 : D_t\geq k \}\quad\text{ and }
\quad\tau_U(k)=\inf\{t \geq 0 : U_t\ge k\}, \label{tauDk}
\end{equation} where $k = \log K$  (see \eqref{tauDkr}), 
  $D_t=\overline{X}_t-X_t$  and   $U_t=X_t-\underline{X}_t.
$
Note that under the current model the stopping times $\tau_D(k)=\varrho_D(K)$ and $\tau_U(k)=\varrho_U(K)$, and they do not depend on $x$ or equivalently the initial stock price.

 \begin{figure}
  \begin{center}
\includegraphics[width=0.7\textwidth]{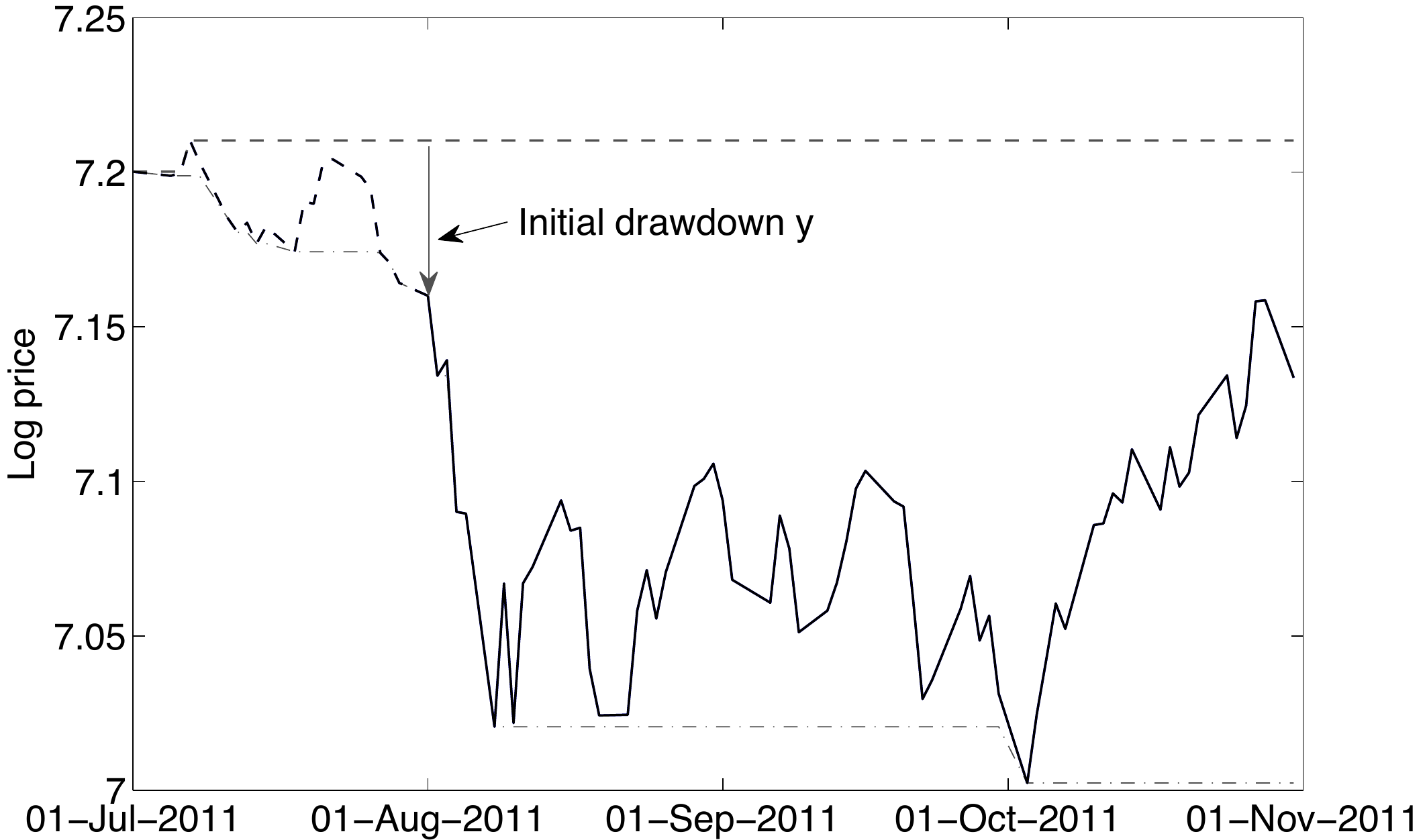}
\end{center}
\begin{small}\caption{Daily log-price of S\&P Index from 07/01/2011 to 11/01/2011. For illustration, July is used as the reference period to record the historical running maximum and minimum. At the end of the reference period, the running maximum  $\overline{s}=7.21$ and the log-price $x=7.16$, so the initial drawdown $y=0.05$.  We remark that the large drawdown  in August 2011 due to the  downgrade of US debt by S\&P.}\label{fig:SPX} \end{small}
\end{figure}

\subsection{Drawdown Insurance and Fair Premium}
We now consider an insurance contract based on a drawdown event. Specifically,  the protection buyer who seeks insurance on a drawdown event of size $k$ will pay a constant premium payment $p$ continuously over time until  the drawdown time $\tau_D(k)$. In return,  the protection buyer will receive the insured amount $\alpha$ at time $\tau_D(k)$. Here, the values $p$, $k$ and $\alpha$ are pre-specified at the contract inception. The contract value of this   drawdown insurance is
\begin{align}f(y;\,p) &=    \E\left\{- \int_0^{\tdk }e^{-rt}p\, dt + \alpha e^{-r\tdk}\,|\,D_0=y\right\}\label{F1}\\
&=\frac{p}{r} -\bigg(\alpha+\frac{p}{r}\bigg)\xi(y), \end{align}
where $\xi$ is the  conditional Laplace transform of $\tau_D(k)$ defined by
\begin{align}\label{def_xi}\xi(y):=\E\{e^{-r\tau_D(k)}\, |\, D_0= y\},\quad 0\le y\le k. \end{align}
This amounts to computing the conditional Laplace transform $\xi$, which admits a closed-form formula as we show next.
 \begin{proposition} The  conditional Laplace transform function $\xi(\cdot)$ is given by \vspace*{-1ex}
  \begin{align}\label{zdprop}
  \xi(y)
  &=e^{\frac{\mu}{\sigma^2}(y-k)}\frac{\sinh(\Xi_{\mu,\sigma}^ry)}{\sinh(\Xi_{\mu,\sigma}^{r}k)}+e^{\frac{\mu}{\sigma^2}y}\frac{\sinh(\Xi_{\mu,\sigma}^r(k-y))}{\sinh(\Xi_{\mu,\sigma}^rk)}\frac{e^{-\frac{\mu}{\sigma^2}k}\Xi_{\mu,\sigma}^r}{\Xi_{\mu,\sigma}^r\cosh(\Xi_{\mu,\sigma}^rk)-\frac{\mu}{\sigma^2}\sinh(\Xi_{\mu,\sigma}^rk)},~ 0\le y\le k.
\end{align}

where
$\Xi_{\mu,\sigma}^r=\sqrt{\frac{2r}{\sigma^2}+\frac{\mu^2}{\sigma^4}}.$
  \end{proposition}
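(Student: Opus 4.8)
The plan is to characterise $\xi$ as the solution of a second-order ODE carrying one Dirichlet and one Neumann boundary condition, and then to confirm that the explicit expression \eqref{zdprop} is that solution by an It\^o/optional-stopping verification. First I would record the semimartingale structure of the drawdown: since $\overline{X}_t=\overline{x}\vee\sup_{s\le t}X_s$ is continuous, non-decreasing, and increases only on the set $\{t:X_t=\overline{X}_t\}=\{t:D_t=0\}$, we have $dD_t=d\overline{X}_t-\mu\,dt-\sigma\,dW_t$; that is, $D$ is a Brownian motion with drift $-\mu$ and volatility $\sigma$ reflected at $0$. In particular $D$ is a time-homogeneous strong Markov process whose law depends only on $D_0=y$, so $\xi(y)=\E_y[e^{-r\tau_D(k)}]$ is well defined, and moreover $\tau_D(k)<\infty$ a.s.\ because a non-degenerate one-dimensional diffusion with the endpoint $0$ made reflecting must reach the level $k$.

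Next I would write down the candidate. By the Feynman--Kac/Dynkin heuristic, $\xi$ should solve $\tfrac{\sigma^2}{2}\xi''-\mu\xi'-r\xi=0$ on $(0,k)$, with $\xi(k)=1$ (continuity at the drawdown level, where the discounted terminal value equals $e^{-r\cdot 0}=1$) and $\xi'(0^+)=0$ (the Neumann condition forced by the reflecting boundary at $0$). The general solution of the ODE is $e^{\frac{\mu}{\sigma^2}y}\bigl(A\cosh(\s y)+B\sinh(\s y)\bigr)$ with $\s=\sqrt{\tfrac{2r}{\sigma^2}+\tfrac{\mu^2}{\sigma^4}}$; imposing the two boundary conditions determines $A$ and $B$, and rewriting the outcome via $\sinh(\s(k-y))=\sinh(\s k)\cosh(\s y)-\cosh(\s k)\sinh(\s y)$ produces precisely \eqref{zdprop}.

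The verification is the step that needs care. Let $g$ denote the right-hand side of \eqref{zdprop}; by construction $g\in C^2$, solves the ODE, and satisfies $g(k)=1$ and $g'(0)=0$. Applying It\^o's formula to $e^{-rt}g(D_t)$, the $dt$-terms cancel by the ODE, the $dW$-integral is a local martingale, and the finite-variation term $e^{-rt}g'(D_t)\,d\overline{X}_t$ vanishes because $d\overline{X}_t$ is carried by $\{D_t=0\}$ and $g'(0)=0$ --- this cancellation is exactly why the Neumann condition is the right boundary condition. Hence $M_t:=e^{-r(t\wedge\tau_D(k))}g\bigl(D_{t\wedge\tau_D(k)}\bigr)$ is a bounded local martingale (since $0\le D_{t\wedge\tau_D(k)}\le k$, $g$ is continuous on $[0,k]$, and $r\ge 0$), hence a true martingale, and optional sampling gives $g(y)=\E_y[M_T]$ for every $T\ge0$.

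Finally, letting $T\to\infty$ and using $\tau_D(k)<\infty$ a.s., the continuity of $D$ (so that $D_{\tau_D(k)}=k$), $g(k)=1$, and dominated convergence, one gets $M_T\to e^{-r\tau_D(k)}$ in $L^1$, whence $g(y)=\E_y[e^{-r\tau_D(k)}]=\xi(y)$, which is \eqref{zdprop}. The only genuinely delicate points are the treatment of the running-maximum (local-time) term at the reflecting boundary and the integrability justifying the passage $T\to\infty$; solving the ODE itself is routine.
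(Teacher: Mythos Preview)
Your argument is correct and complete: you correctly identify $D$ as a drifted Brownian motion reflected at $0$, derive the right ODE with the Dirichlet condition at $k$ and the Neumann condition at $0$, solve it, and close with a clean It\^o/optional-sampling verification in which the local-time term at the boundary is handled exactly right. The passage $T\to\infty$ is justified as you say.

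The paper, however, takes a different route. Rather than solving the boundary-value problem directly, it conditions on the two-sided exit of $D$ from $(0,k)$: by the strong Markov property at $\tau_D^-(0)\wedge\tau_D(k)$,
\[
\xi(y)=\E_y\bigl[e^{-r\tau_D(k)}\1_{\{\tau_D(k)<\tau_D^-(0)\}}\bigr]
+\E_y\bigl[e^{-r\tau_D^-(0)}\1_{\{\tau_D^-(0)<\tau_D(k)\}}\bigr]\,\xi(0),
\]
and before $\tau_D^-(0)\wedge\tau_D(k)$ the process $D$ behaves like an ordinary (unreflected) Brownian motion with drift $-\mu$, so both expectations are the classical two-sided exit Laplace transforms. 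This explains why the formula \eqref{zdprop} splits into two terms with $\sinh(\s y)$ and $\sinh(\s(k-y))$ in the numerators. The remaining unknown $\xi(0)$ is then read off from Lehoczky's formula for the Laplace transform of the drawdown time started at zero.

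In short: the paper's decomposition is more probabilistic and makes the structure of the two summands transparent, at the cost of importing $\xi(0)$ from the literature; your PDE-plus-verification approach is fully self-contained and would transfer verbatim to any reflected diffusion for which the ODE can be solved, but it obscures the two-sided-exit interpretation of the individual terms.
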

  \begin{proof}Define the first time that the drawdown process $(D_t)_{t\ge 0}$ decreases to a level $\theta \ge 0$ by
  \begin{align}\label{tauDtheta}\tau_D^-(\theta):=\inf\{t\ge 0\,:\,D_t\le \theta\}.\end{align}
By the strong Markov property of process $D$ at  $\tau_D^-(0)$,  we have that for $t\le \tau_D(k)$,
\begin{align}\xi(D_t)&=\E\{e^{-r\tau_D(k)}\,|\,{D}_t\}\notag\\
  &=\E\{e^{-r\tdk}\1_{\{\tdk<\tau_D^{-}(0)\}}\,|\,D_t\}+\E\{e^{-r\tau_D^{-}(0)}
  \1_{\{\tdk>\tau_D^{-}(0)\}}\,|\,D_t\}\xi(0)\notag\\
  &=e^{\frac{\mu}{\sigma^2}(D_t-k)}\frac{\sinh(\Xi_{\mu,\sigma}^rD_t)}{\sinh(\Xi_{\mu,\sigma}^{r}k)}+e^{\frac{\mu}{\sigma^2}D_t}\frac{\sinh(\Xi_{\mu,\sigma}^r(k-D_t))}{\sinh(\Xi_{\mu,\sigma}^rk)}\xi(0).\label{zd}\end{align}
Therefore, the problem is reduced to finding $\xi(0)$, which is known (see \cite{Lehoczky77}):
\[ \xi(0)=\frac{e^{-\frac{\mu}{\sigma^2}k}\Xi_{\mu,\sigma}^r}{\Xi_{\mu,\sigma}^r\cosh(\Xi_{\mu,\sigma}^rk)-\frac{\mu}{\sigma^2}\sinh(\Xi_{\mu,\sigma}^rk)}.\]
Substituting this to \eqref{zd} yields \eqref{zdprop}.  \end{proof}

Therefore, the contract value $f(y;p)$ in \eqref{F1} is explicit given for any premium rate $p$. The fair premium $P^*$ is found from the equation $f(y;P^*)=0$, which yields
\begin{align} P^* = \frac{r \alpha \xi(y)}{1- \xi(y)}.   \label{premium1}\end{align}

\begin{remark}\label{rem_1}
 Our formulation can be adapted to the case when the drawdown insurance is paid for upfront. Indeed, we can set $p=0$ in \eqref{F1},  then the price of this contract  at time zero is $f(y;0)$. On the other hand, if the insurance premium is paid over a pre-specified period of time $T'$, rather than up to the random drawdown time, then the present value of the   premium cash flow  $\frac{p}{r} (e^{-rT'}-1)$ will replace the first term in the expectation of (2.7). In this case, setting the contract value zero at inception, the fair premium is given by $P^*(T'):= \frac{f(y;\,0) r}{1- e^{-rT'}}>0$. In Section \ref{sect-drawuptoo}, we discuss the case where the holder will stop premium payment  if a drawup event occurs prior to drawdown or maturity.
\end{remark}

For both the insurer and protection buyer, it is useful to know how long the drawdown is expected to occur. This leads us to compute the expected time to a drawdown of size $k\ge 0$, under the physical measure $\P$.  The measure $\P$ is equivalent to $\Q$, whereby the drift  of $S$ is the annualized growth rate $\nu$,  not the risk-free rate $r$. Under measure  $\P$, the  log price is \[X_t = x + \tilde{\mu} t + \sigma W^\P_t, \quad \text{ with } \tilde{\mu} = \nu - \sigma^2/2,\] where $W^\P$ is a $\P$-Brownian motion.

\begin{proposition}\label{exptime}The expected time to drawdown of size $k$ is given by
\begin{align}\label{Etdk}\E_{\P}\{\tau_D(k)|D_0=y \} = \frac{y\cdot\rho_{\tau}(y;\,k)+(y-k)\cdot e^{\frac{2\tilde{\mu}}{\sigma^2}(y-k)}\rho_{\tau}(k-y;\,k)}{\tilde{\mu}}+\rho_{\tau}(y;\,k)\cdot\frac{e^{\frac{2\tilde{\mu}}{\sigma^2}k}-\frac{2\tilde{\mu}}{\sigma^2}k-1}{(\frac{2\tilde{\mu}}{\sigma^2})^2},\end{align}
where
$\rho_{\tau}(y;\,k)\mathop{:=}e^{\frac{\tilde{\mu}}{\sigma^2}y}\frac{\sinh(\frac{\tilde{\mu}}{\sigma^2}(k-y))}{\sinh(\frac{\tilde{\mu}}{\sigma^2}k)}.$
\end{proposition}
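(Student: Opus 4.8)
The quantity $\phi(y):=\E_{\P}\{\tdk\,|\,D_0=y\}$ is a Feynman--Kac functional of the drawdown, and the plan is to compute it by solving the associated boundary value problem and then recasting the answer in terms of $\rho_{\tau}$. The structural point is that, until it is absorbed at $k$, the drawdown $D_t=\overline{X}_t-X_t$ is a Brownian motion with drift $-\tilde\mu$ and volatility $\sigma$ reflected at $0$: on $\{D_t>0\}$ the running maximum $\overline X$ is flat, so $dD_t=-\tilde\mu\,dt-\sigma\,dW^{\P}_t$ there, while $\overline X$ increases only on $\{D_t=0\}$, which is precisely Skorokhod reflection. By Dynkin's formula, $\phi$ therefore solves $\tfrac{\sigma^2}{2}\phi''-\tilde\mu\,\phi'=-1$ on $(0,k)$ with the reflecting (Neumann) condition $\phi'(0)=0$ and the absorbing (Dirichlet) condition $\phi(k)=0$; the condition at $0$ is the analytic shadow of the strong Markov reset used to derive \eqref{zdprop}, namely that a return of $X$ to its running maximum restarts the drawdown from $0$. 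Since $y/\tilde\mu$ is a particular solution and $\{1,e^{2\tilde\mu y/\sigma^2}\}$ spans the homogeneous solutions, $\phi(y)=A+B\,e^{2\tilde\mu y/\sigma^2}+y/\tilde\mu$, and the two boundary conditions determine $A$ and $B$ explicitly in terms of $e^{2\tilde\mu k/\sigma^2}$.

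It then remains to rewrite this explicit $\phi$ in the form \eqref{Etdk}, which two elementary identities accomplish. First, a one-line hyperbolic manipulation gives $\rho_{\tau}(y;k)=(e^{2\tilde\mu k/\sigma^2}-e^{2\tilde\mu y/\sigma^2})/(e^{2\tilde\mu k/\sigma^2}-1)$; this also exhibits $\rho_{\tau}(y;k)$ as the exit probability $\P(\tau_D^-(0)<\tdk\,|\,D_0=y)$ (see \eqref{tauDtheta}) and $e^{2\tilde\mu(y-k)/\sigma^2}\rho_{\tau}(k-y;k)$ as its complement $1-\rho_{\tau}(y;k)$. Second, using the complement relation, $y\,\rho_{\tau}(y;k)+(y-k)\,e^{2\tilde\mu(y-k)/\sigma^2}\rho_{\tau}(k-y;k)=(y-k)+k\,\rho_{\tau}(y;k)$. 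Substituting these into $\phi$ collapses the $\rho_{\tau}$-free part to $(y-k)/\tilde\mu$ and leaves $\rho_{\tau}(y;k)$ multiplied by $k/\tilde\mu$ plus the constant $\tfrac{\sigma^2}{2\tilde\mu^2}(e^{2\tilde\mu k/\sigma^2}-\tfrac{2\tilde\mu}{\sigma^2}k-1)$, which is \eqref{Etdk}.

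An equivalent route, closer in spirit to the derivation of \eqref{zdprop}, is to observe that that derivation is measure-independent, so $\E_{\P}\{e^{-r\tdk}\,|\,D_0=y\}$ is given by the right-hand side of \eqref{zdprop} with $\mu,\Xi_{\mu,\sigma}^{r}$ replaced by $\tilde\mu,\Xi_{\tilde\mu,\sigma}^{r}$; then $\phi(y)$ is minus the $r$-derivative of this at $r=0$, which one organizes through the decomposition $\phi(y)=\E_{\P}\{\tdk\1_{\{\tdk<\tau_D^-(0)\}}|D_0=y\}+\E_{\P}\{\tau_D^-(0)\1_{\{\tau_D^-(0)<\tdk\}}|D_0=y\}+\phi(0)\,\rho_{\tau}(y;k)$, the first two terms being $r$-derivatives of $\sinh$-ratios and $\phi(0)=-\partial_r\xi(0)|_{r=0}$ coming from Lehoczky's formula for $\xi(0)$. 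The main obstacle in either approach is not conceptual but bookkeeping: carrying the $\sigma$-dependence through the hyperbolic algebra (the $\sigma^2/(2\tilde\mu^2)$ scaling of the constant term is easy to mislay), and, in the Laplace route, tracking the cancellations when differentiating the Lehoczky factor at $r=0$. A minor secondary point is legitimizing the interchange $\partial_r\E_{\P}\{e^{-r\tdk}\}=\E_{\P}\{\partial_r e^{-r\tdk}\}$ at $r=0$, which needs $\E_{\P}\{\tdk\}<\infty$ for every $\tilde\mu\in\R$; this follows at once from the solved ODE, and a priori by dominating $\tdk$ by the first passage time of $X$ to $\overline X_0-k$ when $\tilde\mu\le 0$ and by positive recurrence of the reflected drawdown when $\tilde\mu>0$.
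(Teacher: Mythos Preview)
Your proposal is correct. Your primary route---solving the Neumann--Dirichlet ODE for the expected absorption time of the reflected Brownian motion $D$---is a genuinely different argument from the paper's. The paper works with the underlying process $X$ rather than with $D$: it uses the strong Markov decomposition $\tdk=\tau_{x+y-k}\wedge\tau_{x+y}+(\tdk\circ\theta_{\tau_{x+y}})\1_{\{\tau_{x+y}<\tau_{x+y-k}\}}$ (until $X$ revisits its initial maximum $\overline X_0=x+y$, the drawdown is simply $x+y-X$, so $\tau_D^-(0)=\tau_{x+y}$ and the drawdown reaching $k$ coincides with $\tau_{x+y-k}$), applies optional sampling with $M_t=X_t-\tilde\mu t$ to get $\E_{\P}\{\tau_{x+y-k}\wedge\tau_{x+y}\}$, identifies the two-sided exit probabilities as $\rho_{\tau}(y;k)$ and its complement, and imports $\E_{\P}\{\tdk\,|\,D_0=0\}$ from \cite{HadjVece}. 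Your ``alternative route'' is in fact exactly this decomposition, organised via the Laplace transform. The ODE approach is more self-contained (the Neumann condition produces $\phi(0)$ for free rather than requiring it as an external input), while the paper's martingale argument keeps the two structural pieces---the expected two-sided exit time and the reset at zero---separately visible, which is why \eqref{Etdk} is already written in terms of the exit probabilities $\rho_{\tau}$ and no algebraic repackaging of the sort in your second paragraph is needed.
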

\begin{proof}
  By  the Markov property of the process $(X_t)_{t\ge0}$, we know that
  \[\tdk=\tau_{x+y-k}\wedge\tau_{x+y}+(\tdk\circ\theta_{\tau_{x+y}})\cdot\1_{\{\tau_{x+y}<\tau_{x+y-k}\}},~\P\text{-a.s.}\]
  where $\tau_w=\inf\{t\ge 0\,:\,X_t=w\},$ and $\theta_\cdot$ is the standard Markov shift operator. If $\tilde{\mu}\neq0$, applying the optional sampling theorem to uniformly integrable martingale $(M_{t\wedge\tau_{x+y-k}\wedge\tau_{x+y}})_{t\ge0}$ with $M_t=X_t-\tilde{\mu}t$, we obtain that
  \[\E_{\P}\{\tau_{x+y-k}\wedge\tau_{x+y}|X_0=x\}=\frac{y\cdot\P(\tau_{x+y}<\tau_{x+y-k}|X_0=x)+(y-k)\cdot\P(\tau_{x+y-k}<\tau_{x+y}|X_0=x)}{\tilde{\mu}}.\]
  Moreover, using the fact that $\P(\tau_{x+y}<\tau_{x+y-k}|X_0=x)=\rho_{\tau}(y;\,k)$, $\P(\tau_{x+y-k}<\tau_{x+y}|X_0=x)=e^{\frac{2\tilde{\mu}}{\sigma^2}(y-k)}\rho_{\tau}(k-y;\,k)$ and Eq. (11) of \cite{HadjVece}:
  \[\E_{\P}\{\tdk|D_0=0\}=\frac{e^{\frac{2\tilde{\mu}}{\sigma^2}k}-\frac{2\tilde{\mu}}{\sigma^2}k-1}{(\frac{2\tilde{\mu}}{\sigma^2})^2},\]
  we conclude the proof for $\tilde{\mu}\neq0$. The case of $\tilde{\mu}=0$ is obtained by taking the limit $\tilde{\mu}\to 0$.
  \end{proof}

\section{Cancellable Drawdown Insurance}\label{sect-cancellableinsurance} As is common in insurance and derivatives markets, investors may demand  the option to voluntarily terminate their contracts early. Typical examples include American options and equity-indexed annuities with surrender rights.   In this section, we  incorporate a cancelable feature into our drawdown insurance, and investigate the  optimal timing to terminate the contract.

With  a cancellable drawdown insurance,  the  protection buyer can terminate the position by paying a constant fee $c$ anytime prior to a pre-specified drawdown of size $k$. For a notional amount of $\alpha$ with premium rate $p$, the fair valuation  of this contract  is found from the  optimal stopping problem:
\begin{align}\label{cancellableV}V(y;\,p) &= \sup_{0\leq \tau<\infty}\E\left\{- \int_0^{\tdk \wedge \tau}e^{-rt}p\, dt - c e^{-r\tau} \1_{\{\tau< \tdk\}} + \alpha e^{-r\tdk} \1_{\{\tdk\le \tau\}}\,|\, D_0 = y\right\}
\end{align} for  $y\in[0,k)$.   The  fair premium $P^*$ makes the contract value zero at inception, i.e. $V(y; P^*) =0$.

 We observe that it is never optimal to cancel and pay the fee $c$ at $\tau=\tau_D(k)$ since the contract expires and pays at $\tau_D(k)$. Hence, it is sufficient to consider a smaller set of stopping times $\mathcal{S}:=\{\tau \in \mathbb{F}\,:\, 0<\tau<\tau_D(k)\}$, which consists of  $\mathbb{F}$-stopping times strictly bounded by $\tau_D(k)$. We will show in Section \ref{sect-optcancel} that  the  set of \emph{candidate} stopping times are in fact the drawdown stopping times $\tau=\tau_D^{-}(\theta)$ indexed by their respective thresholds $\theta \in (0,k)$ (see \eqref{tauDtheta}).

\subsection{Contract Value Decomposition}\label{sect-decomp}

Next, we show that the  cancellable drawdown insurance can be decomposed into an ordinary drawdown insurance and  an American-style claim on  the drawdown insurance. This provides a key insight for the explicit computation of the contract value as well as the optimal termination strategy.
\begin{proposition}The cancellable drawdown insurance value admits the decomposition:
  \begin{align}\label{decomp}V(y;\,p)&=-f(y;\,p)+\sup_{\tau\in\mathcal{S}}\E\left\{e^{-r\tau}(f(D_\tau;\,p) - c)\,|\, D_0 = y\right\},
\end{align}
where $f(\cdot;~\cdot)$ is defined in \eqref{F1}.

\end{proposition}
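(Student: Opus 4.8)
The plan is to work directly from the definition of $V(y;p)$ in \eqref{cancellableV} and to split the terminal payoff at the cancellation time $\tau$ into two pieces, using the tower property of conditional expectation together with the strong Markov property of the drawdown process $(D_t)_{t \ge 0}$. First I would observe that on the set $\{\tdk \le \tau\}$ the integral of premia runs only up to $\tdk$, and I would use the fact (noted just before the proposition) that it suffices to take the supremum over $\tau \in \mathcal{S}$, i.e. $\tau < \tdk$ almost surely, so that $\tdk \wedge \tau = \tau$ and the indicator $\1_{\{\tau < \tdk\}}$ equals $1$ while $\1_{\{\tdk \le \tau\}}$ equals $0$ on the relevant event. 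With that restriction the expression inside the supremum simplifies to
\begin{align*}
\E\left\{- \int_0^{\tau} e^{-rt} p\, dt - c e^{-r\tau}\,|\, D_0 = y\right\}
= -\frac{p}{r} + \frac{p}{r}\E\left\{e^{-r\tau}\,|\,D_0=y\right\} - c\,\E\left\{e^{-r\tau}\,|\,D_0=y\right\},
\end{align*}
where I have integrated the premium term explicitly. This already isolates a deterministic constant $-p/r$, which I will want to match against the $-f(y;p)$ term.

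The key algebraic step is then to re-express everything in terms of $f$. Recall from \eqref{F1} that $f(D_\tau;p) = \frac{p}{r} - (\alpha + \frac{p}{r})\xi(D_\tau)$, and, crucially, that by the strong Markov property of $D$ at time $\tau$ one has $\E\{e^{-r\tdk}\1_{\{\text{appropriate}\}}\mid \F_\tau\}$ collapsing the post-$\tau$ continuation value of the ordinary drawdown insurance to exactly $f(D_\tau;p)$ evaluated at the current drawdown level $D_\tau$; more precisely, for $\tau \in \mathcal{S}$,
\begin{align*}
\E\left\{- \int_0^{\tdk} e^{-rt} p\, dt + \alpha e^{-r\tdk}\,\Big|\, \F_\tau\right\}
= - \int_0^{\tau} e^{-rt} p\, dt + e^{-r\tau} f(D_\tau;p).
\end{align*}
Taking $\E\{\cdot \mid D_0 = y\}$ of both sides gives $f(y;p) = \E\{-\int_0^\tau e^{-rt}p\,dt + e^{-r\tau} f(D_\tau;p)\mid D_0 = y\}$, valid for every $\tau \in \mathcal{S}$. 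Rearranging, $\E\{-\int_0^\tau e^{-rt}p\,dt \mid D_0=y\} = f(y;p) - \E\{e^{-r\tau} f(D_\tau;p)\mid D_0=y\}$. Substituting this into the simplified form of the payoff inside the supremum in \eqref{cancellableV} turns it into
\begin{align*}
\E\left\{-\int_0^\tau e^{-rt}p\,dt - c e^{-r\tau}\,\Big|\,D_0=y\right\}
= f(y;p) - \E\left\{e^{-r\tau}\big(f(D_\tau;p) - c\big)\,\Big|\,D_0=y\right\}\cdot(-1) - f(y;p)\,?,
\end{align*}
so I would be careful with signs here: the clean statement is $\E\{-\int_0^\tau e^{-rt}p\,dt - ce^{-r\tau}\mid D_0=y\} = f(y;p) - \E\{e^{-r\tau}(f(D_\tau;p)+c)\mid D_0=y\}$ — wait, this must be reconciled with \eqref{decomp}, which has $-f(y;p)$ outside and $f(D_\tau;p)-c$ inside, so the correct bookkeeping is that the contribution of the $\alpha e^{-r\tdk}\1_{\{\tdk\le\tau\}}$ term (which I dropped above because it vanishes on $\mathcal{S}$) must instead be kept and combined, OR one adds and subtracts $f(y;p)$: writing $-\int_0^\tau e^{-rt}p\,dt = -\frac{p}{r}(1-e^{-r\tau})$ and then using $-\frac{p}{r} = -f(y;p) - (\alpha+\frac{p}{r})\xi(y)\cdot(-1)\dots$ — the honest route is to directly verify that $-\int_0^\tau e^{-rt}p\,dt - ce^{-r\tau} = -f(y;p) + e^{-r\tau}(f(D_\tau;p)-c) + \big[f(y;p) - \int_0^\tau e^{-rt}p\,dt - e^{-r\tau}f(D_\tau;p)\big]$ and to recognize the bracketed term as a quantity whose conditional expectation is zero by the martingale identity derived above.

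I expect the main obstacle to be exactly this sign/bookkeeping reconciliation and, more substantively, the justification that $\E\{e^{-r\tau}f(D_\tau;p)\mid D_0=y\}$ genuinely equals the conditional-expectation identity claimed — that is, verifying the martingale property of $t \mapsto -\int_0^{t} e^{-rs}p\,ds + e^{-rt}f(D_t;p)$ for $t \le \tdk$ and checking that $\tau \in \mathcal{S}$ is a legitimate (almost surely finite, or at least uniformly integrable after discounting) stopping time for optional sampling. The discount factor $e^{-rt}$ with $r>0$ makes the uniform integrability automatic provided $f$ is bounded on $[0,k]$, which follows from the closed-form expression \eqref{zdprop} for $\xi$ and its continuity on the compact interval $[0,k]$. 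Once the identity $f(y;p) = \E\{-\int_0^\tau e^{-rt}p\,dt + e^{-r\tau}f(D_\tau;p)\mid D_0=y\}$ is secured for all $\tau \in \mathcal{S}$, I would subtract it inside the supremum in \eqref{cancellableV}: the supremand equals $\E\{-\int_0^\tau e^{-rt}p\,dt - ce^{-r\tau}\mid D_0=y\} = -f(y;p) + \E\{e^{-r\tau}(f(D_\tau;p) - c)\mid D_0=y\}$, and since $-f(y;p)$ is a constant independent of $\tau$, it pulls out of the supremum, yielding precisely \eqref{decomp}. The final line would simply pull the constant $-f(y;p)$ out of the $\sup$ and record the result.
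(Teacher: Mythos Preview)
Your approach is sound and rests on the same strong Markov computation as the paper, but the bookkeeping is organized differently. The paper first adds and subtracts the \emph{full} drawdown--insurance payoff $-\int_0^{\tdk}e^{-rt}p\,dt+\alpha e^{-r\tdk}$ inside the expectation, which immediately peels off the $-f(y;p)$ term before any restriction of the stopping--time class; the remaining supremum $G(y;p)$ then visibly depends only on $\tau\wedge\tdk$, vanishes at $\tau=\tdk$, and so may be taken over $\mathcal{S}$ with strong Markov giving $G(y;p)=\sup_{\tau\in\mathcal{S}}\E\{e^{-r\tau}(f(D_\tau;p)-c)\}$. Your route restricts to $\mathcal{S}$ at the outset and then invokes an optional--sampling identity. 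This is more direct, but be aware that the pre--proposition remark you cite only says that cancelling \emph{at} $\tdk$ is suboptimal; it does not by itself justify that $\sup$ over all $\tau$ equals $\sup$ over $\mathcal{S}$, since on $\{\tau\ge\tdk\}$ the buyer collects $\alpha$ rather than paying $c$. The paper's add--and--subtract step is precisely what absorbs that case cleanly.

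On the signs: your confusion is genuine, and it is caused by an inconsistency in \eqref{F1} itself --- the expectation on the first line is actually the \emph{negative} of the displayed formula $\frac{p}{r}-(\alpha+\frac{p}{r})\xi(y)$ on the second line. Everything from the proposition onward (including $\tilde f=f-c$ and $f(k)=-\alpha$) uses the second line as the definition of $f$. With that convention your target identity
\[
\E\Big\{-\int_0^\tau e^{-rt}p\,dt - c\,e^{-r\tau}\,\Big|\,D_0=y\Big\}
= -f(y;p)+\E\big\{e^{-r\tau}(f(D_\tau;p)-c)\,\big|\,D_0=y\big\},\qquad \tau\in\mathcal{S},
\]
is correct, and the only martingale fact you need is $\xi(y)=\E\{e^{-r\tau}\xi(D_\tau)\mid D_0=y\}$ for $\tau\in\mathcal{S}$, which is exactly the strong Markov property of $D$ applied to $\tdk$. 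Your displayed conditional--expectation identity, however, is stated with the wrong sign on $f$ under this convention; once you flip it, the algebra closes without the detour through ``$+c$ versus $-c$''.
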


\begin{proof}Let us consider a transformation of ${V}(D_0;p)$. First, by rearranging of
the first integral in \eqref{cancellableV} and using $\1_{\{\tau\geq \tdk\}} =
\1- \1_{\{\tau< \tdk\}}$, we obtain
\begin{align}{V}(y;\,p) &=    \E\left\{- \int_0^{\tdk }e^{-rt}p\, dt + \alpha e^{-r\tdk}\,|\,D_0=y\right\}\notag\\
& +    \underbrace{\sup_{0\leq \tau<\infty}
\E\left\{ \int_{\tdk \wedge \tau}^{\tdk}e^{-rt}p\, dt - c e^{-r\tau} \1_{\{\tau< \tdk\}}
  - \alpha e^{-r\tdk} \1_{\{\tau< \tdk\}}\,|\, D_0 = y\right\}}_{=:G(y;\,p)}\notag\\
&=-f(y;\,p)+G(y;\,p). \label{f_g}
\end{align}
Note that the first term is explicitly given in \eqref{F1} and \eqref{zdprop}, and it does not depend on $\tau$. Since the second term
depends on $\tau$ only through its truncated counterpart $\tau \wedge \tdk \leq \tdk$, and that $\tau = \tau_D(k)$ is suboptimal,  we can in fact
consider maximizing over the restricted collection of stopping times
$\mathcal{S}=  \{\tau \in\F\,:\, 0\leq \tau < \tdk\}$. As a result, the second term simplifies to
\begin{align}\label{cancellableg}G(y;p) &=    \sup_{  \tau\in \mathcal{S}}
\E\left\{  \int_{ \tau}^{\tdk}e^{-rt}p\, dt - c e^{-r\tau} \1_{\{\tau< \tdk\}}
- \alpha e^{-r\tdk} \1_{\{\tau<\tdk\}} \,|\,D_0=y\right\}. \notag
\end{align}

Then, using the fact that $\{ \tau < \tdk, \tau <\infty\}  = \{ D_\tau< k, \tau <\infty\}$, as well as the strong Markov property of $X$,
we can  write
\[G(y;\,p) =    \sup_{  \tau\in \mathcal{S}}
\E\left\{  e^{-r \tau} \tilde{f}(D_\tau;p)  \1_{\{\tau<\infty\}}\,|\,D_0=y\right\}, \notag
\]
where
\begin{align}\label{hy}\tilde{f}(y;p) = \1_{\{y< k\}} \E\left\{ \int_0^{\tau_D(k) }e^{-rt}p\, dt - \alpha
e^{-r\tau_D(k)}-c \,|\,D_\tau=y\right\} .\end{align}
Hence, we complete the proof by simply noting that  $\tilde{f}(y;\,p) = f(y;\,p) - c$ (compare \eqref{hy} and \eqref{F1}).
\end{proof}

Using this decomposition, we can determine the optimal cancellation strategy from  the optimal stopping problem $G(y)$, which we will solve explicitly in the next subsection.

\subsection{Optimal Cancellation Strategy}\label{sect-optcancel}

In order to determine the  optimal cancellation strategy for $V(y;\,p)$ in \eqref{decomp}, it is sufficient to solve the optimal stopping problem represented by $g$ in \eqref{f_g} for a fixed $p$. To simplify notations, let us denote by $f(\cdot)=f(\cdot;\,p)$ and $\tilde{f}(\cdot)=\tilde{f}(\cdot;\,p)$. Our method of solution  consists of  two main steps:
\begin{enumerate} \item We conjecture a candidate class of  stopping times defined by  $\tau\mathop{:=}\tau_D^-(\theta)\wedge\tau_D(k)\in\mathcal{S}$, where
\begin{eqnarray}
\tau_D^-(\theta)=\inf\{t\ge 0\,:\,D_t\le \theta\}, \quad 0<\theta<k. \label{candidate}
\end{eqnarray}
This leads us to look for a candidate optimal threshold $\theta^*\in(0,k)$ using the principle of \emph{smooth pasting} (see \eqref{pasting}).

    \item We rigorously verify via a martingale argument that  the cancellation strategy based on the threshold $\theta^*$ is indeed optimal. \end{enumerate}

\textbf{Step 1.} From the properties of Laplace function $\xi(\cdot)$ (see Lemma \ref{lem1} below), we know the reward function $\tilde{f}(\cdot) := f(\cdot) - c$ in \eqref{decomp}  is a decreasing concave. Therefore, if $\tilde{f}(0)\le 0$, then the second term of \eqref{decomp} is non-positive, and  it is optimal for the protection buyer to never cancel the insurance, i.e., $\tau=\infty$.   Hence, in search of nontrivial optimal exercise strategies,   it is sufficient to study only the case with $\tilde{f}(0)>0$, which is equivalent to
\begin{align} p > \frac{ r (c + \alpha \xi(0))}{1 - \xi(0)}\ge 0.\end{align}

For each  stopping rule conjectured in \eqref{candidate},  we compute explicitly the second term of \eqref{decomp} as
\begin{align}\label{valfunction}  g(y;\theta)&\mathop{:=}\E\left\{e^{-r(\tau_D^-(\theta)\wedge \tau_D(k))}\tilde{f}(D_{\tau_D^-(\theta)\wedge \tau_D(k)})\,|\,D_0=y\right\} \\
&\mathop{=}\E\{e^{-r\tau_D^-(\theta)}\1_{\{\tau_D^-(\theta)<\tau_D(k)\}}\tilde{f}(\theta)\,|\,D_0=y\} +\E\{e^{-r\tau_D(k)}\1_{\{\tau_D(k)\le\tau_D^-(\theta)\}}\tilde{f}(k)\,|\,D_0=y\}\notag \vspace{3pt}\\
  &=\left\{\displaystyle\begin{array}{ll}e^{\frac{\mu}{\sigma^2}(y-\theta)}\frac{\sinh(\Xi_{\mu,\sigma}^r(k-y))}{\sinh(\Xi_{\mu,\sigma}^r(k-\theta))}\tilde{f}(\theta),~&\text{if}~y>\theta\vspace{3pt}\\
   \tilde{f}(y), &\text{if}~y\le\theta \end{array}.\right.
\end{align}
The \emph{candidate optimal} cancellation threshold $\theta^*\in(0,k)$ is found from  the smooth pasting condition:
\begin{eqnarray}\label{pasting}
\frac{\partial}{\partial y}\bigg|_{y=\theta}g(y; \theta)=\tilde{f}^{'}(\theta).
\end{eqnarray}
This is equivalent to seeking the root  $\theta^*$ of the equation:
\begin{align}\label{Ftheta}F(\theta)\mathop{:=}\bigg(\frac{\mu}{\sigma^2}-\Xi_{\mu,\sigma}^r\coth(\Xi_{\mu,\sigma}^r(k-\theta))\bigg)\tilde{f}(\theta)-\tilde{f}^{'}(\theta) =0, \end{align}
where $\tilde{f}$ and $\tilde{f}'$ are explicit in view of  \eqref{F1} and \eqref{zdprop}. Next, we show that the root $\theta^*$ exists and is unique (see Section \ref{proof-prop-unique} for the proof).

\begin{lemma} \label{prop-unique} There exists a unique $\theta^*\in(0,k)$ satisfying the smooth pasting condition \eqref{pasting}.
\end{lemma}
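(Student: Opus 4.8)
The plan is to use the reduction already carried out before \eqref{Ftheta}: it suffices to show that the function $F$ of \eqref{Ftheta} has exactly one zero in $(0,k)$. Abbreviate $\Xi:=\Xi_{\mu,\sigma}^{r}$ and $A(\theta):=\frac{\mu}{\sigma^{2}}-\Xi\coth(\Xi(k-\theta))$, so that $F(\theta)=A(\theta)\tilde{f}(\theta)-\tilde{f}'(\theta)$ on $[0,k)$. I will use the following ingredients, all available from \eqref{zdprop} and Lemma \ref{lem1}: $\tilde{f}=f(\cdot;p)-c$ is smooth on $[0,k]$, with $\tilde{f}(k)=-\alpha-c<0$, $\tilde{f}'(0)=-(\alpha+\frac{p}{r})\xi'(0)=0$, and $\tilde{f}'<0$ on $(0,k)$; moreover $\xi$, hence $\tilde{f}$, satisfies $\frac{\sigma^{2}}{2}\tilde{f}''=\mu\tilde{f}'+r\tilde{f}-(p-rc)$ on $(0,k)$. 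We are in the nontrivial regime $\tilde{f}(0)>0$, which is equivalent to $p>\frac{r(c+\alpha\xi(0))}{1-\xi(0)}$ and in particular forces $p>rc$ (since $\xi(0)\in(0,1)$ and $\alpha,c>0$). Finally, since $r>0$ we have $\Xi>|\mu|/\sigma^{2}$, and $\coth>1$ on $(0,\infty)$, so $A(\theta)<0$ for every $\theta\in[0,k)$.

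\emph{Existence.} I would observe that $F(0)=A(0)\tilde{f}(0)-\tilde{f}'(0)=A(0)\tilde{f}(0)<0$, while as $\theta\uparrow k$ we have $A(\theta)\to-\infty$, $\tilde{f}(\theta)\to-\alpha-c<0$, and $\tilde{f}'(\theta)$ bounded, so $F(\theta)\to+\infty$. By continuity of $F$ on $[0,k)$ and the intermediate value theorem, $F$ has a root $\theta^{*}\in(0,k)$.

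\emph{Uniqueness.} The crux is the identity $F'(\theta^{*})>0$ at \emph{every} root $\theta^{*}\in(0,k)$ of $F$. First, $\tilde{f}(\theta^{*})\neq0$: otherwise $F(\theta^{*})=-\tilde{f}'(\theta^{*})>0$, a contradiction; hence $A(\theta^{*})=\tilde{f}'(\theta^{*})/\tilde{f}(\theta^{*})$. Differentiating $\coth$ and using $\Xi^{2}=\frac{2r}{\sigma^{2}}+\frac{\mu^{2}}{\sigma^{4}}$ yields the Riccati identity $A'=\frac{2r}{\sigma^{2}}+\frac{2\mu}{\sigma^{2}}A-A^{2}$. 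Substituting this, the ODE for $\tilde{f}$, and the relation $A(\theta^{*})\tilde{f}(\theta^{*})=\tilde{f}'(\theta^{*})$ into $F'=A'\tilde{f}+A\tilde{f}'-\tilde{f}''$, the $A\tilde{f}'$ terms cancel, and the remaining $\tilde{f}$- and $\tilde{f}'$-terms cancel against those coming from $-\tilde{f}''$, leaving $F'(\theta^{*})=\frac{2(p-rc)}{\sigma^{2}}>0$. Consequently each zero of the smooth function $F$ is isolated (inverse function theorem) and a point of strict increase; since $F<0$ near $0$ and $F\to+\infty$ near $k$, the zero set is finite. If it had two elements, pick two consecutive ones $\theta_{1}<\theta_{2}$; then $F$ keeps a single sign on $(\theta_{1},\theta_{2})$, which contradicts $F'(\theta_{1})>0$ if that sign is negative and $F'(\theta_{2})>0$ if it is positive. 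Hence the root $\theta^{*}$ is unique, which proves the lemma.

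I expect the uniqueness half to be the main obstacle. A direct monotonicity argument for $F$ breaks down because the two curvature contributions in $F'$, namely $A'<0$ and $\tilde{f}''\le0$, act in the same direction and do not combine into a definite sign on all of $(0,k)$. The way around this is to evaluate $F'$ only along the solution set $\{F=0\}$, where the relation $A=\tilde{f}'/\tilde{f}$ turns $F'$ into an expression that the Riccati equation for $A$ and the linear second-order ODE for $\tilde{f}$ force to collapse to the single constant $2(p-rc)/\sigma^{2}$; the sole place the parameter regime enters is in checking that this constant is positive, i.e. that $\tilde{f}(0)>0$ implies $p>rc$.
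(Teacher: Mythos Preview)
Your proof is correct, and it takes a genuinely different (and cleaner) route than the paper's.

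\textbf{What the paper does.} The paper first locates the unique zero $\theta_0\in(0,k)$ of $\tilde f$, observes $F(0)<0$ and $F(\theta_0)=-\tilde f'(\theta_0)>0$ to get a root in $(0,\theta_0)$, and checks directly that $F>0$ on $(\theta_0,k)$. For uniqueness on $(0,\theta_0)$ it then proves the global estimate $F'(\theta)>0$ for \emph{all} $\theta\in(0,\theta_0)$, which requires the auxiliary Lemma~\ref{lem1}: a somewhat involved computation introducing the function $\Lambda(y)=e^{-\mu y/\sigma^2}\xi(y)/\sinh(\Xi(k-y))$ and minimizing an explicit two-variable expression $H(\theta,\theta_0)$.

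\textbf{What you do.} You bypass Lemma~\ref{lem1} entirely by evaluating $F'$ only along the zero set of $F$. The key insight is that $A$ satisfies the Riccati equation $A'=\tfrac{2r}{\sigma^2}+\tfrac{2\mu}{\sigma^2}A-A^2$ while $\tilde f$ satisfies the linear ODE $\tfrac{\sigma^2}{2}\tilde f''=\mu\tilde f'+r\tilde f-(p-rc)$; combining these yields the identity
\[
F'(\theta)=\Bigl(\tfrac{2\mu}{\sigma^2}-A(\theta)\Bigr)F(\theta)+\tfrac{2(p-rc)}{\sigma^2},
\]
valid on all of $(0,k)$, so that at every root $F'(\theta^*)=2(p-rc)/\sigma^2>0$. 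This forces every zero to be a strict upcrossing, hence the zero is unique.

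\textbf{Trade-offs.} Your argument is shorter, more conceptual, and reveals the exact value of $F'(\theta^*)$; it also makes transparent exactly where the standing hypothesis $\tilde f(0)>0$ enters (only to guarantee $p>rc$). The paper's approach gives a bit more: it establishes $F'>0$ on the whole interval $(0,\theta_0)$, not merely at the root, and explicitly localizes $\theta^*\in(0,\theta_0)$. Your argument recovers the localization as well, since $A\tilde f=\tilde f'$ at the root with $A<0$ and $\tilde f'<0$ forces $\tilde f(\theta^*)>0$, i.e.\ $\theta^*<\theta_0$; you might want to state this explicitly, as the paper uses $\theta^*<\theta_0$ in the subsequent verification step.
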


\textbf{Step 2.} With the candidate optimal threshold  $\theta^*$ from (\ref{pasting}), we  now verify that  the candidate value function $g(y;\theta^*)$ dominates the reward function $\tilde{f}(y) = f(y) - c$. Recall that $g(y;\theta^*)=\tilde{f}(y)$ for $y\in(0,\theta^*)$.

\begin{lemma}\label{lem2}The  value function corresponding to the candidate optimal threshold $\theta^*$ satisfies
  \[g(y;\theta^*)>\tilde{f}(y),\quad \forall y\in(\theta^*,k).\]
\end{lemma}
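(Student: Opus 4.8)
My plan is to study the single function $w(y):=g(y;\theta^*)-\tilde f(y)$ on $(\theta^*,k)$ and to show $w>0$ there by identifying it with the solution of an explicit linear Cauchy problem; throughout write $\Lambda:=\Xi_{\mu,\sigma}^r$ for brevity. The first step is to record the ODEs obeyed by $g(\cdot;\theta^*)$ and $\tilde f$. On $\{D>0\}$ the drawdown process evolves as a Brownian motion with drift $-\mu$ and volatility $\sigma$, with generator $\L=\tfrac{\sigma^2}{2}\partial_{yy}-\mu\,\partial_y$; the equation $\L u-ru=0$ has fundamental solutions $e^{\lambda_\pm y}$, $\lambda_\pm=\tfrac{\mu}{\sigma^2}\pm\Lambda$. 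By \eqref{valfunction} the restriction of $g(\cdot;\theta^*)$ to $(\theta^*,k)$ is a constant multiple of $e^{\frac{\mu}{\sigma^2}y}\sinh(\Lambda(k-y))$ and hence solves $\L u-ru=0$ there; and $\xi$ in \eqref{zdprop} is a constant-coefficient combination of $e^{\lambda_+y}$ and $e^{\lambda_-y}$, so $\L\xi=r\xi$ on $(0,k)$, whence $f=\tfrac{p}{r}-(\alpha+\tfrac{p}{r})\xi$ gives $\L f-rf\equiv -p$ and therefore
\[\L\tilde f(y)-r\,\tilde f(y)=rc-p,\qquad y\in(0,k).\]

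In the nontrivial regime $p>\tfrac{r(c+\alpha\xi(0))}{1-\xi(0)}$; since $\xi(0)\in(0,1)$ with $\alpha>0$, $c\ge0$, this lower bound strictly exceeds $rc$, so $d:=p-rc>0$. Subtracting the two ODEs then gives the central identity
\[\L w(y)-r\,w(y)=d>0,\qquad y\in(\theta^*,k).\]
Continuity of $y\mapsto g(y;\theta^*)$ at $\theta^*$ (see \eqref{valfunction}) gives $w(\theta^*)=0$, and the smooth-pasting condition \eqref{pasting} defining $\theta^*$ gives $w'(\theta^*)=0$; as $w$ is $C^1$ up to $\theta^*$ from the right and satisfies the above linear ODE on $(\theta^*,k)$, it is the unique such solution. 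Solving explicitly (take $-d/r$ as a particular solution and fit $e^{\frac{\mu}{\sigma^2}y}\cosh(\Lambda(y-\theta^*))$ and $e^{\frac{\mu}{\sigma^2}y}\sinh(\Lambda(y-\theta^*))$ to the initial conditions) yields
\[w(y)=\frac{d}{r}\big(\phi(y-\theta^*)-1\big),\qquad \phi(z):=-\tfrac{\lambda_-}{2\Lambda}\,e^{\lambda_+z}+\tfrac{\lambda_+}{2\Lambda}\,e^{\lambda_-z}.\]

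It remains to verify $\phi(z)>1$ for $z>0$. Since $\lambda_+-\lambda_-=2\Lambda$ we get $\phi(0)=1$, and $\phi'(z)=\tfrac{\lambda_+\lambda_-}{2\Lambda}\big(e^{\lambda_-z}-e^{\lambda_+z}\big)$; here $\lambda_+\lambda_-=(\tfrac{\mu}{\sigma^2})^2-\Lambda^2=-\tfrac{2r}{\sigma^2}<0$, while $\lambda_-<0<\lambda_+$ because $\Lambda>|\mu|/\sigma^2$ (as $r>0$), so $e^{\lambda_-z}-e^{\lambda_+z}<0$ for $z>0$ and thus $\phi'>0$ on $(0,\infty)$. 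Hence $\phi(y-\theta^*)>1$ on $(\theta^*,k)$, and since $d>0$ we conclude $w>0$ there, i.e.\ $g(y;\theta^*)>\tilde f(y)$ for all $y\in(\theta^*,k)$.

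The crux is the first step --- reducing the domination to the sign of the single constant $d=p-rc$ --- which rests on $\xi$ (hence $f,\tilde f$) solving the equation for $\L-r$ up to an additive constant, together with the fact that the nontrivial-premium condition forces $p>rc$; after that, the argument is a routine linear-ODE solve plus the monotonicity $\phi'>0$. If one prefers to avoid the explicit solution, a maximum-principle route also works: $w(\theta^*)=w'(\theta^*)=0$ and $\L w-rw=d>0$ force $w''(\theta^*{+})=2d/\sigma^2>0$, so $w>0$ on a right-neighbourhood of $\theta^*$, and if $w$ first returned to $0$ at some $y_1\in(\theta^*,k)$ it would attain a positive interior maximum on $[\theta^*,y_1]$, where $\L w-rw\le-rw<0$, a contradiction. (A bare minimum principle does not suffice, since $\L w-rw$ has the ``wrong'' (positive) sign; it is the smooth-pasting equality $w'(\theta^*)=0$ that makes the argument go through.)
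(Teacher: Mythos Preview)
Your proof is correct, and it takes a genuinely different route from the paper's. The paper works with the same function $J(y)=g(y;\theta^*)-\tilde f(y)$ but establishes $J>0$ by brute-force convexity: it writes $J$ in terms of $\xi$ and the auxiliary function $\beta(y)=e^{\frac{\mu}{\sigma^2}(y-\theta^*)}\sinh(\Lambda(k-y))/\sinh(\Lambda(k-\theta^*))$, then verifies $\beta''>0$ by a case split on the sign of $\mu$, concludes $J''>0$, and finally uses $J'(\theta^*)=0$, $J(\theta^*)=0$ to deduce $J>0$. Your argument is more conceptual: you observe that $g(\cdot;\theta^*)$ is $(\L-r)$-harmonic on $(\theta^*,k)$ while $\tilde f$ satisfies $(\L-r)\tilde f=rc-p$, so the difference $w$ solves the inhomogeneous equation $(\L-r)w=p-rc$ with zero Cauchy data at $\theta^*$; the nontrivial-premium assumption forces $p-rc>0$, and then either the explicit solution or a short maximum-principle argument yields $w>0$. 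This avoids the $\mu$-case analysis entirely and makes transparent exactly which inequality on $p$ is being used. The paper's approach, on the other hand, gives the slightly stronger information that $J$ is strictly convex on $(\theta^*,k)$, which your explicit formula also delivers (since $\phi''(z)=\tfrac{\lambda_+\lambda_-}{2\Lambda}(\lambda_-e^{\lambda_-z}-\lambda_+e^{\lambda_+z})>0$) but your maximum-principle variant does not.
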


  We provide a proof in  \ref{sect-appx-gf}.  By the definition of  $g(y;\theta^*)$ in \eqref{valfunction},  repeated conditioning yields that the  stopped process  $\{e^{-r ( t \wedge \tau_D^-(\theta^*) \wedge  \tau_D(k)) }g(D_{t \wedge \tau_D^-(\theta^*)\wedge  \tau_D(k))};\theta^*)\}_{t\ge 0}$ is a martingale. For $y\in[0,\theta^*)$, we have
\[
\frac{1}{2}\sigma^2\tilde{f}^{''}(y)-\mu \tilde{f}^{'}(y)-r\tilde{f}(y)=-C\bigg(\frac{1}{2}\sigma^2\xi^{''}(y)-\mu\xi^{'}(y)-r(\xi(y)-\xi(\theta_0))\bigg)=-Cr\xi(\theta_0)<0,\]
where $C=\alpha+\frac{p}{r}$.
As a result, the stopped process $\{e^{-r (t\wedge  \tau_D(k))}g(D_{t\wedge  \tau_D(k))};\theta^*)\}_{t\ge 0}$ is in fact a super-martingale.

To finalize the proof, we note that for $y \in (\theta^*, k)$ and any stopping time $\tau \in \mathcal{S}$,
\begin{align}\label{supermtg}g(y;\theta^*) \ge \E\{e^{-r\tau }g(D_{\tau};\theta^*)\,|\,D_0=y\} \ge   \E \{e^{-r\tau }\tilde{f}(D_{\tau})\,|\,D_0=y\}.\end{align} Maximizing over $\tau$, we see that $g(y;\theta^*)\ge G(y)$.   On the other hand, \eqref{supermtg} becomes an equality when  $\tau = \tau^-_D(\theta^*)$, which yields the reverse inequality $g(y;\theta^*)\le G(y)$.  As a result, the stopping time $\tau^-_D(\theta^*)$ is indeed the solution to the optimal stopping problem $G(y)$.

In summary, the protection buyer will continue to pay the premium over time until the drawdown  process $D$ either falls to the level $\theta^*$  in \eqref{pasting} or reaches to the level $k$ specified by the contract, whichever comes first. In Figure \ref{fig1} (left), we illustrate the optimal cancellation level $\theta^*$. As shown in our proof, the optimal stopping value function $g(y)$  connects smoothly with the intrinsic value $\tilde{f}(y)=f(y)-c$  at $y=\theta^*$. In Figure \ref{fig1} (right), we show that the fair premium $P^*$ is decreasing with respect to the protection downdown size $k$. This is intuitive since the drawdown time $\tau_D(k)$ is almost surely longer for a larger drawdown size $k$ and the payment at $\tau_D(k)$ is fixed at $\alpha$. The protection buyer is expected to pay over a longer period of time but at a lower premium rate.

    \begin{figure}[th!]\begin{center}  \subfigure[Smooth pasting]{\label{A1}
          \includegraphics[width=0.48\textwidth]{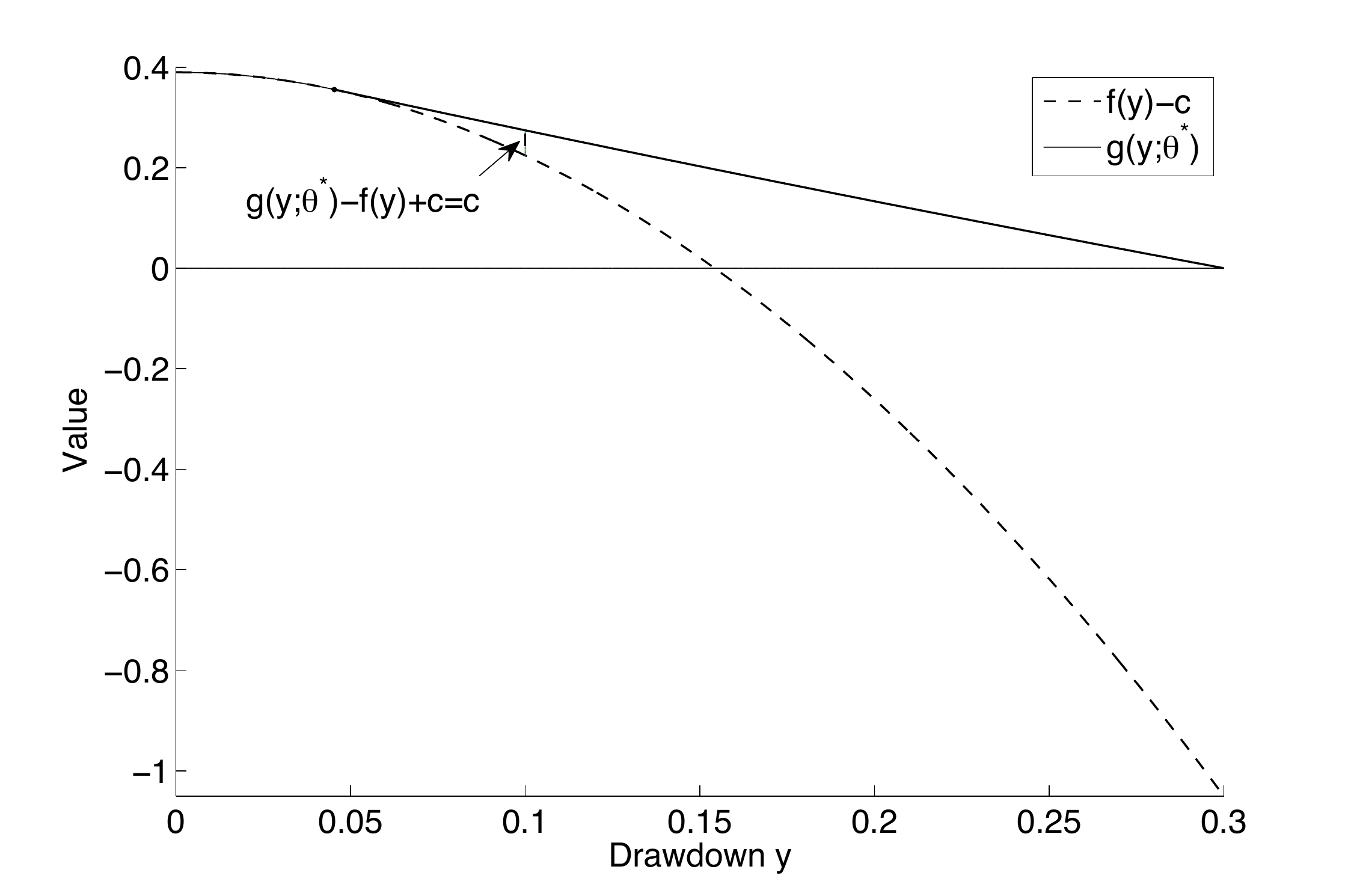}
        }
        \subfigure[Fair premium vs. $k$]{
          \label{A2}
          \includegraphics[width=0.48\textwidth]{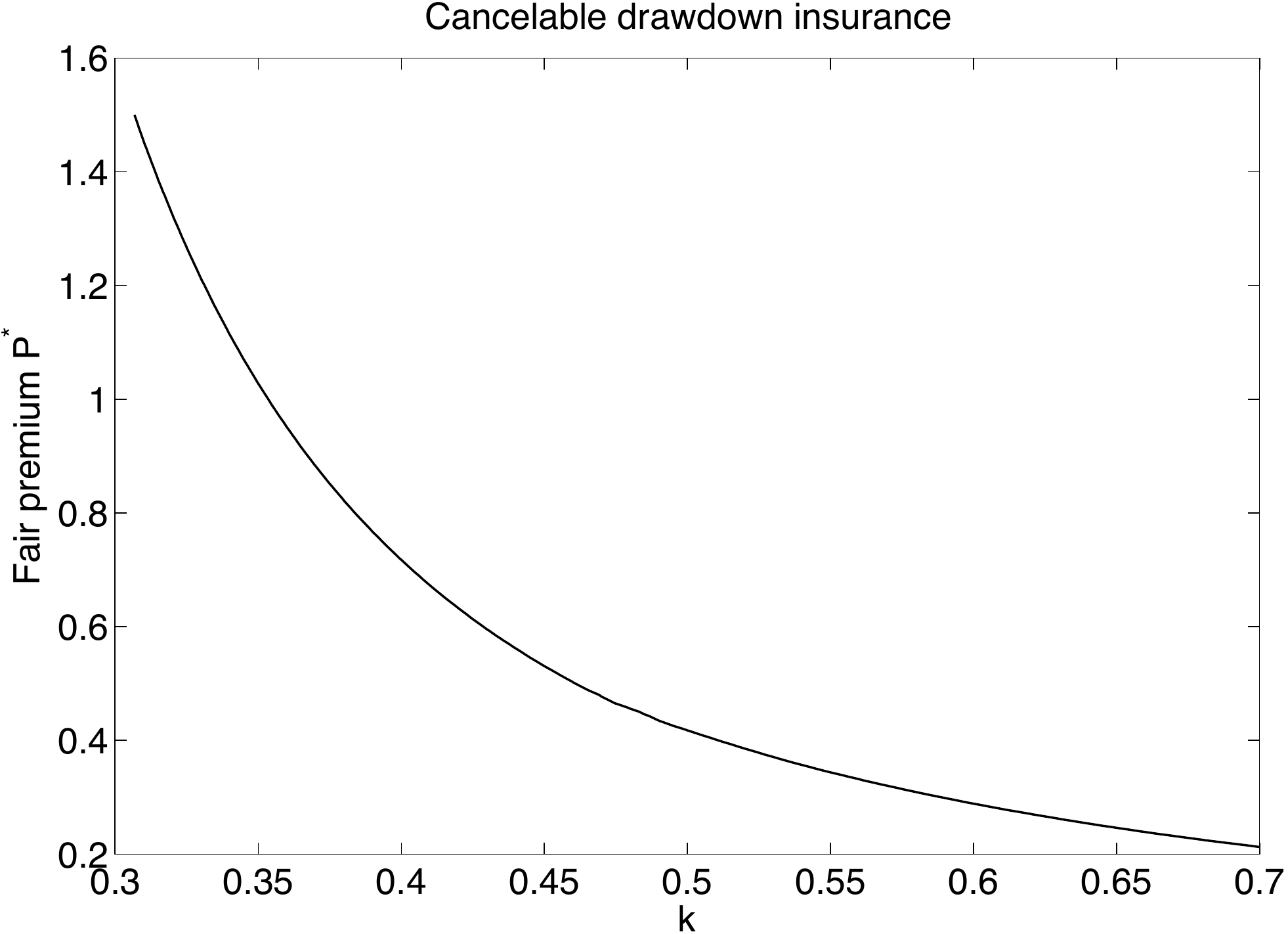}
        }
      \begin{small}
      \caption{Left panel:  the optimal stopping value function $g$ (solid) dominates  and pastes smoothly onto the intrinsic value  $\tilde{f}=f-c$ (dashed). It is  optimal to cancel the insurance as soon as the drawdown process falls to   $\theta^* = 0.05$ (at which $g$ and $\tilde{f}$ meets).  The parameters are $r=2\%, \sigma=30\%, y=0.1, k=0.3, c=0.05$,  $\alpha=1$, and $p$ is taken to be the fair premium value $P^*=1.5245$. At $y=0.1$, according to the fair premium equation $V(y\,;P^*)=0$, and hence $g (y; \theta^*)= f(y)$ here. Right panel: the fair premium of the cancellable drawdown insurance decreases with respect to the drawdown level $k$ specified for the contract.}\label{fig1} \end{small}     \end{center}
\end{figure}

Lastly, with the optimal cancellation strategy, we can also compute the expected time to contract termination, either as a result of a drawdown or voluntary cancellation. Precisely, we have
\begin{proposition} For $0<\theta^*<y<k$, we have
  \begin{align}\label{ExpTime}\E_{\P}\{\tau_D^{-}(\theta^*)\wedge\tdk|D_0=y\}=\frac{(y-\theta^*)\rho_{\tau}(y-\theta^*;\,k-\theta^*)+(y-k)e^{\frac{2\tilde{\mu}}{\sigma^2}(y-k)}\rho_{\tau}(k-y;\,k-\theta^*)}{\tilde{\mu}},\end{align}
\end{proposition}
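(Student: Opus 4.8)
The plan is to follow closely the argument behind Proposition~\ref{exptime}: reduce the two-sided drawdown exit $\tau_D^-(\theta^*)\wedge\tdk$ to a classical two-sided first passage problem for the drifted Brownian motion $X$, and then apply optional sampling. The crucial observation that makes this work is that, because $\theta^*>0$, the running maximum of $X$ stays frozen up to the termination time, so on the relevant time interval the drawdown is simply $\overline{x}-X_t$ moving on a fixed bounded interval.

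First I would fix $X_0=x$ and $\overline{X}_0=\overline{x}$ with $\overline{x}-x=y$, and note that since $D_0=y\in(\theta^*,k)$ and $D$ is continuous, $D_t\in[\theta^*,k]\subset(0,\infty)$ for all $t\in[0,\tau_D^-(\theta^*)\wedge\tdk]$; hence $X_t<\overline{X}_t$ there, so the running maximum does not increase, i.e. $\overline{X}_t\equiv\overline{x}$ and $D_t=\overline{x}-X_t$ on that interval. Writing $\tau_w=\inf\{t\ge0:X_t=w\}$ as in the proof of Proposition~\ref{exptime}, this yields the pathwise identity $\tau_D^-(\theta^*)\wedge\tdk=\tau_{x+y-\theta^*}\wedge\tau_{x+y-k}$, the exit time of $X$ from the bounded interval $(x+y-k,\,x+y-\theta^*)$, whose interior contains $x$ precisely because $\theta^*<y<k$.

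Next, assuming $\tilde{\mu}\ne0$, I would apply the optional sampling theorem to the martingale $M_t=X_t-\tilde{\mu}t$ stopped at $\tau:=\tau_{x+y-\theta^*}\wedge\tau_{x+y-k}$: since $X$ is bounded on $[0,\tau]$ and $\E_\P\{\tau\}<\infty$ (the exit time of a bounded interval by a Brownian motion with drift has exponential moments), $(M_{t\wedge\tau})_{t\ge0}$ is uniformly integrable, so $\E_\P\{X_\tau-x\}=\tilde{\mu}\,\E_\P\{\tau\}$ gives $\E_\P\{\tau\,|\,D_0=y\}=\tilde{\mu}^{-1}\big[(y-\theta^*)\,\P(\tau_{x+y-\theta^*}<\tau_{x+y-k})+(y-k)\,\P(\tau_{x+y-k}<\tau_{x+y-\theta^*})\big]$. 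The two hitting probabilities come from the scale function of $X$ — the same gambler's-ruin computation already invoked in the proof of Proposition~\ref{exptime} — namely $\P(\tau_{x+y-\theta^*}<\tau_{x+y-k})=\rho_{\tau}(y-\theta^*;\,k-\theta^*)$ and $\P(\tau_{x+y-k}<\tau_{x+y-\theta^*})=e^{\frac{2\tilde{\mu}}{\sigma^2}(y-k)}\rho_{\tau}(k-y;\,k-\theta^*)$, with $\rho_{\tau}$ as in Proposition~\ref{exptime}; a short $\sinh$-addition identity confirms these sum to $1$. Substituting into the previous display gives \eqref{ExpTime} for $\tilde{\mu}\ne0$, and the degenerate case $\tilde{\mu}=0$ is recovered by letting $\tilde{\mu}\to0$, both sides of \eqref{ExpTime} being continuous there.

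I do not expect a genuine obstacle: this statement is essentially a specialization of Proposition~\ref{exptime} in which the residual-excursion term $(\tdk\circ\theta_{\tau_{x+y}})\1_{\{\tau_{x+y}<\tau_{x+y-k}\}}$ appearing there is absent, since we stop at $\tau_D^-(\theta^*)$ before $D$ can return to $0$, so no analogue of ``Eq.~(11) of \cite{HadjVece}'' is needed. The only points requiring a line of care are the uniform-integrability justification for optional sampling (finiteness of $\E_\P\{\tau\}$), the bookkeeping that translates the drawdown barriers $\theta^*$ and $k$ into the $X$-barriers $x+y-\theta^*$ and $x+y-k$, and checking that the $\tilde{\mu}\to0$ limit of the right-hand side is well defined.
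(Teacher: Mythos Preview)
Your proposal is correct and follows essentially the same route as the paper: reduce $\tau_D^-(\theta^*)\wedge\tdk$ to the two-sided exit time $\tau_{x+y-\theta^*}\wedge\tau_{x+y-k}$ of the drifted Brownian motion $X$ (using that the running maximum is frozen while $D\in[\theta^*,k]$), then apply optional sampling to $M_t=X_t-\tilde{\mu}t$ together with the gambler's-ruin probabilities. The only cosmetic difference is that the paper handles $\tilde{\mu}=0$ directly via the quadratic martingale $M_t=X_t^2-\sigma^2 t$ rather than by taking a limit.
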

where $\rho_{\tau}(\cdot,\cdot)$ is defined in Proposition \ref{exptime}.
\begin{proof}
  According to the optimal cancellation strategy, we have
  \[\tau_D^{-}(\theta^*)\wedge\tdk=\tau_{x+y-\theta^*}\wedge\tau_{y-k},~\P\text{-a.s.}\]
  where $\tau_w=\inf\{t\ge0\,:\,X_t=w\}$. Applying the optional sampling theorem to the uniformly martingale $(M_{t\wedge\tau_{x+y-\theta^*}\wedge\tau_{y-k}})_{t\ge0}$ with $M_t=X_t-\tilde{\mu}t$ if $\tilde{\mu}\neq0$, or $M_t=(X_t)^2-\sigma^2t$ if $\tilde{\mu}=0$, we obtain the result in the \eqref{ExpTime}.
  \end{proof}

\begin{remark}\label{remark-finitemat}
In the finite maturity case, the set of candidate stopping times  is changed to $\{\tau \in \mathbb{F}\,:\, 0\le t\le T\}$ in \eqref{cancellableV}.  Like Proposition \ref{decomp}, the contract value ${V}_T(y;p)$ at time zero for premium rate $p$  still admits the   decomposition
  \[{V}_T(y;p)=-f_T(0,y;p)+\sup_{0\le \tau\le T}\E\{e^{-r\tau}({f}_T(\tau,D_\tau;p) - c)\1_{\{\tau<\tdk\}}\,|\,D_0=y\},\]
  where \[f_T(t,y)= \frac{p}{r}-(\alpha+\frac{p}{r})\,\xi_T(t,y), \]
and $\xi_T(t,y)$ is  the conditional Laplace transform of $\tdk\wedge(T-t)$:
  \[\xi_T(t,y)=\E\{e^{-r(\tdk\wedge(T-t))}\,|\,D_t=y\}, \quad 0\le t\le \tdk\wedge T.\]
This problem is no longer time-homogeneous, and the fair premium can be determined by numerically solving the associated optimal stopping problem.
\end{remark}

\section{Incorporating Drawup Contingency}\label{sect-drawuptoo}
We now consider an  insurance contract that provides  protection  from any specified drawdown with a drawup contingency.  This insurance contract  may expire early, whereby stopping the premium payment, if a drawup event occurs prior to drawdown or maturity.  From the investor's viewpoint, the realization of a drawup implies little need to insure against a drawdown. Therefore, this drawup contingency   is an attractive cost-reducing  feature to the investor.

\subsection{The Finite-Maturity Case}
 First, we consider the case with a finite maturity $T$.  Specifically, if a $k$-unit  drawdown  occurs prior to a drawup of the same size or the expiration date $T$, then the investor will receive the insured amount $\alpha$ and stop the premium payment thereafter. Hence, the risk-neutral discounted expected payoff   to the investor is given by
\begin{align}\label{1}v (y,z; \,p) = \E^{y,z}\left\{- \int_0^{\tdk \wedge \tuk \wedge T}e^{-rt}p dt + \alpha e^{-r\tdk} \1_{\{\tdk\le \tuk\wedge T\}}\right\},
\end{align}
where the expectation is taken under the pricing measure $\Q^{y,z}(\cdot)\equiv\Q(\cdot\,|\,D_0=y, U_0=z)$.

The fair premium $P^*$ is chosen such that the contract has value zero at time zero, that is,  \begin{align}v(P^*)=0.\label{fairp}\end{align} Applying \eqref{fairp} to \eqref{1},  we obtain a formula for the fair premium: \begin{eqnarray}\label{pstar}P^* = \frac{r \alpha \E^{y,z}\{e^{-r\tdk} \1_{\{\tdk\le \tuk\wedge T\}}\} }{1-\E^{y,z}\{e^{-r ({\tdk
\wedge\tuk \wedge T})}\}}.\end{eqnarray}
As  a result, the fair premium involves computing the  expectation \mbox{$\E^{y,z}\{e^{-r\tdk} \1_{\{\tdk\le \tuk\wedge T\}}\}$} and the Laplace transform  of \mbox{${\tdk
\wedge\tuk \wedge T}$}.

In order to determine the fair premium $P^*$ in \eqref{pstar}, we first write
\begin{align}
  L_r^{T}&:=  \E^{y,z} \{e^{-r\tdk}\1_{\{\tdk\le \tuk\wedge T\}}\}\label{LmuT}\\
  &=\int_0^Te^{-rt}\,\frac{\partial}{\partial t} \Q^{y,z}(\tdk < \tuk \wedge t)\,dt.\label{truncatedL}
  \end{align}
 The special case of the probability on the right-hand side, $\Q^{0,0}$ is derived using results from \cite{ZhanHadj} (eq. (39)-(40)), namely,
\begin{align}
  \Q^{0,0}(\tdk < \tuk \wedge t)=&\frac{e^{-\frac{2\mu k}{\sigma^2}}+\frac{2\mu k}{\sigma^2}-1}{(e^{-\frac{\mu k}{\sigma^2}}-e^{\frac{\mu k}{\sigma^2}})^2}-\sum_{n=1}^\infty \frac{2n^2\pi^2}{C_n^2}\exp\bigg(-\frac{\sigma^2C_n}{2k^2}t\bigg)\notag \\
  &\times\bigg[(1-(-1)^ne^{-\frac{\mu k}{\sigma^2}})\bigg(1+\frac{n^2\pi^2\sigma^2 t}{k^2}-\frac{4\mu^2k^2}{\sigma^4C_n}\bigg)+(-1)^n\frac{\mu  k}{\sigma^2} e^{-\frac{\mu k}{\sigma^2}}\bigg], \label{Q1}
  \end{align}
  where $C_n=n^2\pi^2+\mu^2k^2/\sigma^4$. Therefore, the expectation \eqref{truncatedL} can be computed via numerical integration. In the general case that $y\vee z>0$, we have the following result.
  \begin{proposition}
    In the model \eqref{DBM}, for $0\le y,z<k$ and $y\vee z>0$, we have
    \begin{align}
\Q^{y,z}(\tdk\in dt,\tuk>t)&=F_y^\mu(t)dt+G_z^\mu(t)dt-G_{k-y}^\mu(t)dt,\label{Qdensity}
\end{align}
where
\begin{align}
  F_y^\mu(t)&\mathop{:=}\frac{\sigma^2}{k^2}\sum_{n=1}^\infty(n\pi)e^{\frac{(y-k)\mu}{\sigma^2}}\exp\bigg(-\frac{
    \sigma^2C_n}{2k^2}t\bigg)\sin\frac{n\pi(k-y)}{k},\\
  G_z^\mu(t)&\mathop{:=}\frac{\sigma^2}{k^2}\sum_{n=1}^\infty(n\pi)e^{-\frac{\mu z}{\sigma^2}}\exp\bigg(-\frac{\sigma^2C_n}{2k^2}t\bigg)\times\bigg\{\frac{n^2\pi^2\sigma^2t-2k^2}{C_nk}\bigg(\frac{n\pi}{k}\cos\frac{n\pi z}{k}+\frac{\mu}{\sigma^2}\sin\frac{n\pi z}{k}\bigg)\notag\\
  &+\frac{n\pi}{C_n}\bigg[\frac{n\pi}{k}\bigg(\frac{2k^2\mu}{C_n\sigma^2}+{z}\bigg)\sin\frac{n\pi z}{k}+\bigg(1-\frac{\mu z}{\sigma^2}-\frac{2\mu^2k^2}{C_n\sigma^4}\bigg)\cos\frac{n\pi z}{k}\bigg]\bigg\}.
  \end{align}
\end{proposition}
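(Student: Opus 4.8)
The plan is to obtain the joint density of $(\tdk, \tuk)$ on the event $\{\tdk < \tuk\}$ by a pathwise decomposition anchored at the last time the log-price visits its running maximum before the drawdown, together with the known formula for the case $y=z=0$ supplied by \cite{ZhanHadj}. The starting observation is that $D$ and $U$ cannot both be strictly positive at the same time: when $D_t>0$ the process $X$ is strictly below $\overline X_t$, the maximum is frozen, and $U_t = X_t - \underline X_t$ is increasing only through $X$, while when $U_t>0$ the minimum is frozen. Hence the initial data $(y,z)$ with $y\vee z>0$ really sits in one of two regimes, $y>0$ (current drawdown) or $z>0$ (current drawup), and by the symmetry $x\mapsto -x$, $\mu\mapsto-\mu$, $(D,U)\mapsto(U,D)$ it suffices to treat $y>0$, $z=0$ and then patch.

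First I would reduce to the excursion straddling the drawdown. Starting from $D_0=y>0$, let $\tau_D^-(0)$ be the first time $D$ hits $0$ (equivalently $X$ reaches a new maximum), as in \eqref{tauDtheta}. On $\{\tdk < \tau_D^-(0)\}$ the drawdown of size $k$ is achieved within the very first excursion of $X$ below its frozen maximum $x$; on this event $\overline X$ stays at $x$, $X$ moves as a Brownian motion with drift $\mu$ started at $x$, $D_t = x - X_t$, and $U_t = X_t - \underline X_t$, so the pair $(\tdk,\tuk)$ is governed by a two-sided exit problem for $X$ from the strip determined by the level $x-k$ (drawdown) and the level $x+k$ relative to the running minimum — this is exactly a first-passage computation of Brownian motion with drift, whose density can be written via the standard eigenfunction (sine-series) expansion on the killed strip; this produces the terms $F_y^\mu$ and $-G_{k-y}^\mu$ after matching boundary behaviour. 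On the complementary event $\{\tau_D^-(0) \le \tdk\}$, the drawdown has not yet occurred, $X$ has returned to its maximum, the running minimum has possibly decreased, but — and this is the key point — once $D$ returns to $0$ and a new maximum is set, by the strong Markov property at $\tau_D^-(0)$ the future evolution restarts from $(D,U)=(0, U_{\tau_D^-(0)})$; one then needs that $U_{\tau_D^-(0)} < k$ on the relevant event (otherwise $\tuk$ has already happened) and re-expresses the contribution through $\Q^{0,z'}$, which by the same argument collapses to $\Q^{0,0}$ after one more application of the Markov property at the first return of $U$ to $0$. Iterating, the contribution of all later excursions telescopes and is captured by the single term $G_z^\mu$ built from the $y=z=0$ density of \cite{ZhanHadj}.

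Concretely, the derivation I would write is: (i) condition on $\{\tdk<\tau_D^-(0)\}$ vs. its complement; (ii) on the first event, solve the backward (or forward Fokker–Planck) equation $\partial_t q = \tfrac12\sigma^2 \partial_{yy} q - \mu\,\partial_y q$ on $y\in(0,k)$ with absorption at $y=k$ (drawdown) and the appropriate Neumann-type behaviour at $y=0$ coming from the local time pushing of the maximum, expand in $\sin(n\pi y/k)$, and identify the Laplace/time inversion with $F_y^\mu$ and $-G_{k-y}^\mu$; (iii) on the second event, use the strong Markov property at $\tau_D^-(0)$ to write the remaining mass as $\Q^{0,U_{\tau_D^-(0)}}(\tdk\in dt,\tuk>t)$ integrated against the law of $U_{\tau_D^-(0)}$, then — since the problem started from $(0,z)$ is again of the same type with the roles interchanged — reduce it via the Markov property at the first hitting time of $0$ by $U$ to the base case $\Q^{0,0}$ from \eqref{Q1}; (iv) assemble the three pieces and check the edge cases $y\to0$ or $z\to0$ reproduce \eqref{Q1}, and that integrating $dt$ over $(0,\infty)$ recovers the known drawdown-before-drawup probability.

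The main obstacle I anticipate is bookkeeping at the boundary $y=0$: the maximum process increases only on the set $\{D=0\}$, so the generator identity there is not a plain PDE but carries a reflection/local-time term, and getting the correct coefficients in the sine expansion (in particular the somewhat involved bracket defining $G_z^\mu$, with its $n^2\pi^2\sigma^2 t - 2k^2$ and $\mu^2 k^2/\sigma^4$ corrections) requires carefully differentiating the $y=z=0$ formula \eqref{Q1} of \cite{ZhanHadj} in the right variable and tracking how the initial displacement $z$ enters. A secondary subtlety is justifying the termwise manipulations — interchanging the infinite sum with $\partial_t$ and with the expectation in \eqref{truncatedL} — which follows from the exponential decay $\exp(-\sigma^2 C_n t/(2k^2))$ for $t$ bounded away from $0$, but should be stated. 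Once the boundary term is pinned down, the rest is the routine assembly sketched above.
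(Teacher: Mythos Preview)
Your opening claim is false: $D_t$ and $U_t$ can certainly both be strictly positive. Whenever $\underline X_t < X_t < \overline X_t$ we have $D_t=\overline X_t-X_t>0$ and $U_t=X_t-\underline X_t>0$ simultaneously, and the paper explicitly allows $y>0$ and $z>0$ together at time zero via the reference-period maximum and minimum. So the reduction ``it suffices to treat $y>0$, $z=0$'' is unjustified, and the excursion decomposition you build on top of it does not cover the general case stated in the proposition.

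Even after repairing that, you are working much harder than needed. The paper's argument is a two-line reduction: it invokes an identity from \cite{CarrZhanHaji} (their (2.7)) which already expresses $\Q^{y,z}(\tdk\le t,\tuk>t)$ purely in terms of two-sided first-exit probabilities of the drifted Brownian motion $X$ from strips $(u,u+k)$, namely
\[
\Q^{y,z}(\tdk\in dt,\tuk>t)=q(t,x,x+y-k,x+y)\,dt+\int_{x+y-k}^{x-z}\frac{\partial}{\partial k}\,q(t,x,u,u+k)\,du\;dt,
\]
where $q(t,x,u,u+k)\,dt=\Q(\tau_u\in dt,\tau_{u+k}>t)$. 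The sine-series formula for $q$ is classical (Anderson \cite{Anderson60}); substituting it, the first term is $F_y^\mu$ directly, and the $u$-integral of $\partial_k q$ over $[x+y-k,\,x-z]$ yields $G_z^\mu-G_{k-y}^\mu$ after elementary calculus. The ``somewhat involved bracket'' in $G_z^\mu$ that worried you is nothing more than the antiderivative in $u$ of the differentiated series. There are no reflecting boundary conditions, no local-time bookkeeping, no iterated strong Markov arguments, and no need to bootstrap from the $y=z=0$ formula; conditioning on the level of the running maximum at $\tdk$ (which is what the \cite{CarrZhanHaji} identity encodes) replaces the entire excursion machinery you outlined.
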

\begin{proof}We begin by differentiating both sides of (2.7)  in \cite{CarrZhanHaji} with respect to maturity $t$ to obtain that
  \begin{align}
\Q^{y,z}(\tdk\in dt,\tuk>t)=q(t,x,y+x-k, y+x)dt+\bigg(\int_{y+x-k}^{x-z}\frac{\partial}{\partial k}q(t,x,u,u+k)du\bigg)dt,\label{FiniteM}
\end{align}
where
\[q(t,x,u,u+k)dt=\Q^{y,z}(\tau_u\in dt, \tau_{u+k}>t)\]{with} $\tau_w:=\inf\{t\ge0\,:\,X_t=w\}$ for $w\in \{u,u+k\}$. The function $q$,  derived in Theorem 5.1 of \cite{Anderson60}, is given by\[q(t,x,u,u+k)=\frac{\sigma^2}{k^2}\sum_{n=1}^\infty(n\pi)e^{\frac{\mu(u-x)}{\sigma^2}}\exp\bigg(-\frac{\sigma^2C_n}{2k^2}t\bigg)\sin\frac{n\pi(x-u)}{k}.\]
Integration yields \eqref{Qdensity} and this completes the proof.
  \end{proof}
Similarly, we express the Laplace transform of  $\tdk \wedge \tuk \wedge T$ as
    \begin{align}\label{E2}
\E^{y,z}\{e^{-r(\tdk \wedge \tuk \wedge T)}\}=&-\int_0^Te^{-rt}\frac{\partial}{\partial t}\Q^{y,z}(\tdk \wedge \tuk >t)dt.
\end{align}
To compute this, we notice that the equivalence of the probabilities (under the reflection of the processes $(X, \overline{X}, \underline{X})$ about $x$): \begin{eqnarray}\label{REFLECTION}\Q^{y,z}_\mu(\tuk\in dt, \tdk>t)=\Q^{z,y}_{-\mu}(\tdk\in dt, \tuk >t).\end{eqnarray}
Therefore, we have
  \begin{align}
-&\frac{\partial}{\partial t}\Q^{y,z}_\mu(\tdk \wedge \tuk >t)dt\notag\\
    &=\Q^{y,z}_\mu(\tdk\in dt,\tuk>t)+\Q^{y,z}_\mu(\tuk\in dt, \tdk> t)\notag\\
    &=F_y^\mu(t)dt+G_z^\mu(t)dt-G_{k-y}^\mu(t)dt+F_{ z}^{-\mu}(t)dt+G_{y}^{-\mu}(t)dt-G_{k- z}^{-\mu}(t)dt.\label{QQQ}
    \end{align} where  $\Q^{y,z}_{\mu}$ denotes the pricing measure whereby  $X$ has drift $\mu$.
Hence, we can again compute the Laplace transform of  $\tdk \wedge \tuk \wedge T$ by numerical integration, and obtain the fair premium $P^*$ for the drawdown insurance via \eqref{pstar}.
\vspace{5pt}

\begin{remark} The expectation \mbox{$\E^{y,z}\{e^{-r\tdk} \1_{\{\tdk\le \tuk\wedge T\}}\}$} and the Laplace transform  of \mbox{${\tdk
\wedge\tuk \wedge T}$} are in fact linked. This is seen through \eqref{REFLECTION}:
\[\E^{y,z}\{e^{-r(\tdk \wedge \tuk \wedge T)}\}=L_r^{T}+R_{r}^{T},\]
where $L_r^T$ is the expectation defined in \eqref{LmuT}, and
\begin{eqnarray}R_r^T\mathop{:=}\E^{y,z}\{e^{-r\tdk}\1_{\{\tdk\le\tuk\wedge T\}}\}.\label{Rr}
  \end{eqnarray}
\end{remark}
\vspace{5pt}

\begin{remark} If the protection buyer pays a periodic premium at
times $t_i=i \Delta t$, $i=0, \ldots, n-1$, with $\Delta t = T/n$, then the fair premium is
\begin{align} {p}^{(n)*}=\frac{ \alpha\,  \E^{y,z}\left\{ e^{-r\tdk} \1_{\{\tdk\le \tuk\wedge T\}} \right\}}{ \sum_{i=0}^{n-1} e^{-r t_i }  \Q^{y,z}\{\tdk \wedge \tuk > t_i\}}.
\end{align}
Compared to the continuous premium case, the fair premium ${p}^{(n)*}$ here involves a sum of the probabilities: $\Q^{y,z}\{\tdk \wedge \tuk > t_i\}$, each given by \eqref{QQQ} above.
\end{remark}

\subsection{Perpetual Case}Now, we consider the drawdown insurance contract that will expire not at a fixed finite time $T$ but as soon as a drawdown/drawup of size $k$ occurs. To study this perpetual case, we take  $T=\infty$ in \eqref{1}. As the next proposition shows, we have  a simple closed-form solution for the fair premium $P^*$, allowing for instant computation of the fair premium and amenable for sensitivity analysis.

\begin{proposition} \label{prop-perp1}The perpetual drawdown insurance  fair premium $P^*$ is given by
\begin{align}
  P^*=\frac{r\alpha L_r^{\infty}}{1-L_r^{\infty}-R_r^{\infty}},
\end{align}
where
\begin{align}\label{L}L_r^{\infty}=F_{y}^\mu+G_{ z}^{\mu}-G_{k-y}^{\mu},\quad R_{r}^{\infty}=F_{ z}^{-\mu}+G_{y}^{-\mu}-G_{k- z}^{-\mu},
\end{align}
with
\begin{align}
F_{y}^{\mu}\mathop{:=}e^{\frac{\mu}{\sigma^2}(y-k)}\frac{\sinh(y\s)}{\sinh(k\s)},~G_{ z}^\mu\mathop{:=}\frac{\s}{2r/\sigma^2}\frac{e^{-\frac{\mu}{\sigma^2} z}\left(-\frac{\mu}{\sigma^2}\sinh( z\s)-\s\cosh( z\s)\right)}{\sinh^2(k\s)}.
  \end{align}
\end{proposition}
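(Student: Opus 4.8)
The plan is to reduce the fair-premium formula \eqref{pstar} with $T=\infty$ to the single quantity $L_r^{\infty}$ and then compute that quantity by Laplace-transforming, in the maturity variable, the finite-maturity density already obtained in the preceding proposition. First I would record that a drawdown or drawup of size $k$ occurs in finite time almost surely, so $\tdk\wedge\tuk<\infty$ a.s.; partitioning on which of the two is reached first (the event $\{\tdk=\tuk\}$ being null) gives $\E^{y,z}\{e^{-r(\tdk\wedge\tuk)}\}=L_r^{\infty}+R_r^{\infty}$ with $L_r^{\infty}=\E^{y,z}\{e^{-r\tdk}\1_{\{\tdk\le\tuk\}}\}$ and $R_r^{\infty}=\E^{y,z}\{e^{-r\tuk}\1_{\{\tuk<\tdk\}}\}$. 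By the reflection identity \eqref{REFLECTION}, which reflects $(X,\overline X,\underline X)$ about $x$ and thereby interchanges the roles of drawdown and drawup while flipping the drift, $R_r^{\infty}$ is obtained from $L_r^{\infty}$ by the substitution $(y,z,\mu)\mapsto(z,y,-\mu)$, i.e. $R_r^{\infty}=F_z^{-\mu}+G_y^{-\mu}-G_{k-z}^{-\mu}$. Hence everything comes down to evaluating $L_r^{\infty}$, after which substitution into \eqref{pstar} yields the claimed formula.

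To compute $L_r^{\infty}$ I would multiply the density identity \eqref{FiniteM}--\eqref{Qdensity} (equivalently, integrate $e^{-rt}$ against the $t$-derivative of eq.~(2.7) of \cite{CarrZhanHaji}) by $e^{-rt}$ and integrate over $t\in(0,\infty)$; interchanging $\int_0^\infty e^{-rt}(\cdot)\,dt$ with the sum/integral over the auxiliary variable is legitimate because $\Q^{y,z}(\tdk\in dt,\tuk>t)$ has total mass at most one. Each building block then becomes a classical two-sided exit transform for the drifted Brownian motion \eqref{DBM},
\[\int_0^\infty e^{-rt}q(t,x,u,u+k)\,dt=e^{\frac{\mu(u-x)}{\sigma^2}}\,\frac{\sinh\big((u+k-x)\Xi_{\mu,\sigma}^r\big)}{\sinh\big(k\,\Xi_{\mu,\sigma}^r\big)},\qquad u\le x\le u+k,\]
which one verifies by solving $\tfrac12\sigma^2 h''+\mu h'-rh=0$ with $h(u)=1$, $h(u+k)=0$. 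Taking $u=y+x-k$ in the first term of \eqref{FiniteM} reproduces $F_y^{\mu}$ exactly.

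For the remaining term of \eqref{FiniteM}, after transforming and pulling $\tfrac{\partial}{\partial k}$ outside the $t$-integral (justified by smoothness), the quotient rule together with $\sinh A\cosh B-\cosh A\sinh B=\sinh(A-B)$ collapses the integrand to $-\dfrac{\Xi_{\mu,\sigma}^r}{\sinh^2(k\,\Xi_{\mu,\sigma}^r)}\,e^{\frac{\mu(u-x)}{\sigma^2}}\sinh\big((x-u)\Xi_{\mu,\sigma}^r\big)$. Substituting $w=u-x$ and using the elementary primitive $\int e^{aw}\sinh(bw)\,dw=\dfrac{e^{aw}\big(a\sinh(bw)-b\cosh(bw)\big)}{a^2-b^2}$ with $a=\mu/\sigma^2$, $b=\Xi_{\mu,\sigma}^r$, so that $a^2-b^2=-2r/\sigma^2$, and evaluating between $w=y-k$ and $w=-z$, produces exactly $G_z^{\mu}-G_{k-y}^{\mu}$ (the sign flips coming from $\sinh(b(y-k))=-\sinh(b(k-y))$ and $e^{a(y-k)}=e^{-a(k-y)}$). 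Combining, $L_r^{\infty}=F_y^{\mu}+G_z^{\mu}-G_{k-y}^{\mu}$, and the reflection substitution gives $R_r^{\infty}$; inserting both into \eqref{pstar} with $T=\infty$ finishes the proof.

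The genuinely delicate point is only bookkeeping: differentiating the ratio of hyperbolic sines with respect to $k$, carrying out the substitution $w=u-x$, and matching the two endpoint evaluations against the definitions of $G_z^{\mu}$ and $G_{k-y}^{\mu}$ — including keeping track of the signs. The analytic ingredients (almost sure finiteness of $\tdk\wedge\tuk$, and the Fubini and differentiation-under-the-integral interchanges) are immediate because one integrates $e^{-rt}\le 1$ against a sub-probability measure in $t$; I do not expect any obstacle there.
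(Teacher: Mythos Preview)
Your proposal is correct and follows essentially the same route as the paper: reduce to $L_r^\infty$ and $R_r^\infty$, obtain $R_r^\infty$ from $L_r^\infty$ by the reflection $(y,z,\mu)\mapsto(z,y,-\mu)$, and compute $L_r^\infty$ by multiplying \eqref{FiniteM} by $e^{-rt}$, integrating over $t$, using the two-sided exit Laplace transform for drifted Brownian motion, and then integrating $\partial_k$ of that transform over $u\in[y+x-k,x-z]$. One small slip: the $\partial_k$ of $e^{\mu(u-x)/\sigma^2}\sinh((u+k-x)\Xi)/\sinh(k\Xi)$ comes out with a \emph{positive} sign, $\Xi\,e^{\mu(u-x)/\sigma^2}\sinh((x-u)\Xi)/\sinh^2(k\Xi)$, not negative; this is exactly the sign bookkeeping you flag, and it does not affect the argument.
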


\begin{proof}
  In the perpetual case, the fair premium is given by
  \[P^*=\frac{r\alpha\E^{y,z}\{e^{-r\tdk}\1_{\{\tdk<\tuk\}}\}}{1-\E^{y,z}\{e^{-r(\tdk\wedge\tuk)}\}}=\frac{r\alpha L_r^{\infty}}{1-L_r^{\infty}-R_r^{\infty}}.\]
  where $L_{r}^{\infty}=\E^{y,z}\{e^{-r\tau_D(k)}\1_{\{\tdk<\tuk\}}\}$ and $R_{r}^{\infty}=\E^{y,z}\{e^{-r\tdk}\1_{\{\tuk<\tdk\}}\}$.
  To get formulas for $L_r^\infty$ and $R_r^\infty$, we begin by multiplying both sides of \eqref{FiniteM} by $e^{-rt}$ and integrate out $t\in[0,\infty)$. Then we obtain that
  \[L_r^\infty=\E^{y,z}\{e^{-r\tau_{y+x-k}}\1_{\{\tau_{y+x-k}<\tau_{y+x}\}}\}+\int_{y+x-k}^{x-z}\frac{\partial}{\partial k}\E^{y,z}\{e^{-r\tau_u}\1_{\{\tau_u<\tau_{u+k}\}}\}du,\]
  where $\tau_w=\inf\{t\ge0\,:\,X_t=w\}$.
  Using  formulas in \cite[p.295]{Borodin1996}, we have that for $u\le x-z<x+y\le u+k$
  \begin{align*}\E^{y,z}\{e^{-r\tau_u}\1_{\{\tau_u<\tau_{u+k}\}}\}&=e^{\frac{\mu}{\sigma^2}(u-x)}\frac{\sinh((u+k-x)\Xi_{\mu,\sigma}^r)}{\sinh(k\Xi_{\mu,\sigma}^r)},\\
    \frac{\partial}{\partial k}\E^{y,z}\{e^{-r\tau_u}\1_{\{\tau_u<\tau_{u+k}\}}\}&=\s e^{\frac{\mu}{\sigma^2}(u-x)}\frac{\sinh((x-u)\s)}{\sinh^2(k\s)}.
  \end{align*}
  An integration yields $L_r^\infty$.
  The computation of $R_r^{\infty}$ follows from the discussion in the proof of Proposition 3.1. This completes the proof of the proposition.
\end{proof}
\vspace{5pt}

Finally, the probability that a drawdown is realized prior to a drawup, meaning that the protection amount will be paid to the buyer before the contract expires upon drawup, is given by
\begin{proposition} Let $y,z\ge0$ such that $y+z<k$, then
  \begin{align}\P(\tdk<\tuk|D_0=y,U_0=z)=e^{\frac{2\tilde{\mu}}{\sigma^2}(y-k)}\rho_{\tau}(k-y;\,k)+\frac{e^{-\frac{2\tilde{\mu} }{\sigma^2}(k-y)}+\frac{2\tilde{\mu}}{\sigma^2}(k-y-z)-e^{-\frac{2\tilde{\mu}}{\sigma^2}z}}{4\sinh^2(\frac{\tilde{\mu}}{\sigma^2}k)}, \label{Ptdk}\end{align}
where $\rho_{\tau}(\cdot;\cdot)$ is defined in Proposition \ref{exptime}.
\end{proposition}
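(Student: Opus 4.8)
The plan is to read \eqref{Ptdk} off from the perpetual formula of Proposition~\ref{prop-perp1} by sending the discount rate $r\downarrow 0$. On the event $\{\tdk<\tuk\}$ the drawdown time $\tdk$ is finite, so $e^{-r\tdk}\1_{\{\tdk<\tuk\}}$ increases to $\1_{\{\tdk<\tuk\}}$ as $r\downarrow 0$, and monotone convergence gives $\P(\tdk<\tuk\mid D_0=y,U_0=z)=\lim_{r\downarrow 0}\E_\P\bigl\{e^{-r\tdk}\1_{\{\tdk<\tuk\}}\bigm| D_0=y,U_0=z\bigr\}$. The computation in the proof of Proposition~\ref{prop-perp1} rests only on Brownian first-passage identities and never uses the particular value of the drift, so carried out under $\P$ (where $X$ has drift $\tilde\mu$) it shows that the expectation inside the limit equals $F_y^{\tilde\mu}+G_z^{\tilde\mu}-G_{k-y}^{\tilde\mu}$, i.e.\ the formula for $L_r^\infty$ in \eqref{L} with $\mu$ and $\Xi_{\mu,\sigma}^r$ replaced throughout by $\tilde\mu$ and $\Xi_{\tilde\mu,\sigma}^r$. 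Since $\Xi_{\tilde\mu,\sigma}^r\to|\tilde\mu|/\sigma^2$ as $r\downarrow0$, the task reduces to computing $\lim_{r\downarrow0}F_y^{\tilde\mu}$ and $\lim_{r\downarrow0}\bigl(G_z^{\tilde\mu}-G_{k-y}^{\tilde\mu}\bigr)$.

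The first limit is immediate: by oddness of $\sinh$, $F_y^{\tilde\mu}\to e^{\frac{\tilde\mu}{\sigma^2}(y-k)}\sinh(\tfrac{\tilde\mu}{\sigma^2}y)/\sinh(\tfrac{\tilde\mu}{\sigma^2}k)$, which is precisely $e^{\frac{2\tilde\mu}{\sigma^2}(y-k)}\rho_\tau(k-y;k)$ --- the first term of \eqref{Ptdk}. The second limit is the delicate part, and is a genuine $0/0$ indeterminate form: the prefactor common to $G_z^{\tilde\mu}$ and $G_{k-y}^{\tilde\mu}$ carries $\Xi_{\tilde\mu,\sigma}^r/(2r/\sigma^2)$, which diverges since $2r/\sigma^2=(\Xi_{\tilde\mu,\sigma}^r)^2-\tilde\mu^2/\sigma^4$, while the bracketed difference of numerators vanishes, because at $\Xi=|\tilde\mu|/\sigma^2$ the quantity $e^{-\frac{\tilde\mu}{\sigma^2}w}\bigl(-\tfrac{\tilde\mu}{\sigma^2}\sinh(\Xi w)-\Xi\cosh(\Xi w)\bigr)$ collapses to the $w$-independent constant $-|\tilde\mu|/\sigma^2$. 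I would resolve this by expanding that bracket to first order in $\Xi$ about $\Xi=|\tilde\mu|/\sigma^2$ (equivalently, to first order in $r$ about $0$), dividing by $(\Xi_{\tilde\mu,\sigma}^r)^2-\tilde\mu^2/\sigma^4$, and passing to the limit; simplifying with $e^{-\beta w}\cosh(\beta w)=\tfrac12(1+e^{-2\beta w})$, $\beta=\tilde\mu/\sigma^2$, the difference converges to $\bigl(e^{-\frac{2\tilde\mu}{\sigma^2}(k-y)}+\tfrac{2\tilde\mu}{\sigma^2}(k-y-z)-e^{-\frac{2\tilde\mu}{\sigma^2}z}\bigr)\big/\bigl(4\sinh^2(\tfrac{\tilde\mu}{\sigma^2}k)\bigr)$, the second term of \eqref{Ptdk}. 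Adding the two limits yields \eqref{Ptdk} for $\tilde\mu\neq0$; the boundary case $\tilde\mu=0$ (and, if it is not covered directly, $y=z=0$) follows by continuity, letting the relevant parameter tend to $0$, the right-hand side being resolved by one application of L'H\^opital.

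The main obstacle is thus the careful bookkeeping in this $0/0$ limit: one must retain the order-$r$ terms in $G_z^{\tilde\mu}-G_{k-y}^{\tilde\mu}$ and track their exact cancellation, not just the leading behaviour; everything else is routine. A more hands-on probabilistic derivation is conceivable as a cross-check --- the assumption $y+z<k$ forces $\tdk\wedge\tuk$ to exceed the first time $X$ reaches the reference maximum $x+y$ or the reference minimum $x-z$, so conditioning on which of these two levels is hit first (a standard two-barrier exit computation for $X$, as in the proof of Proposition~\ref{exptime}) and invoking the strong Markov property produces a recursion for $\P(\tdk<\tuk\mid D_0=y,U_0=z)$ in terms of its values at the reset states $(D_0,U_0)=(0,y+z)$ and $(y+z,0)$ --- but this is less direct than reading the limit off from Proposition~\ref{prop-perp1}, which I would favour.
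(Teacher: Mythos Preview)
Your proposal is correct and follows essentially the same route as the paper: the paper's proof likewise obtains $\P(\tdk<\tuk\mid D_0=y,U_0=z)=\lim_{r\to0^+}(F_y^{\tilde\mu}+G_z^{\tilde\mu}-G_{k-y}^{\tilde\mu})$ from Proposition~\ref{prop-perp1}, reads off the first term from $F_y^{\tilde\mu}$, and resolves the remaining $0/0$ limit $\lim_{r\to0^+}(G_z^{\tilde\mu}-G_{k-y}^{\tilde\mu})$ by L'H\^opital's rule. Your write-up is in fact more detailed than the paper's (which dispatches the whole argument in three lines), supplying the monotone-convergence justification and the explicit mechanism behind the indeterminate form.
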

\begin{proof}
  From the proof of Proposition \ref{prop-perp1}, we obtain that
  \begin{align*}\P(\tdk<\tuk|D_0=y,U_0=z)=&\lim_{r\to0^+}(F_y^{\tilde{\mu}}+G_z^{\tilde{\mu}}-G_{k-y}^{\tilde{\mu}})\\=&e^{\frac{2\tilde{\mu}}{\sigma^2}(y-k)}\rho_{\tau}(k-y;\,k)+\lim_{r\to0^+}(G_z^{\tilde{\mu}}-G_{k-y}^{\tilde{\mu}}).\end{align*}
Finally, L'H\^opital's rule yields the last limit and \eqref{Ptdk}.
\end{proof}

 In Figure \ref{fig2} (left), we see that  the fair premium increases with the maturity $T$, which is due to the higher likelihood of the drawdown event at or before expiration. For the perpetual case, we illustrate in   Figure \ref{fig2} (right) that higher volatility leads to higher fair premium. From this observation, it is expected in a volatile market drawdown insurance would become more costly.

 \begin{figure}[ht!]\begin{center}  \subfigure[Fair premium vs. Maturity $T$]{\label{B1}
          \includegraphics[width=0.475\textwidth]{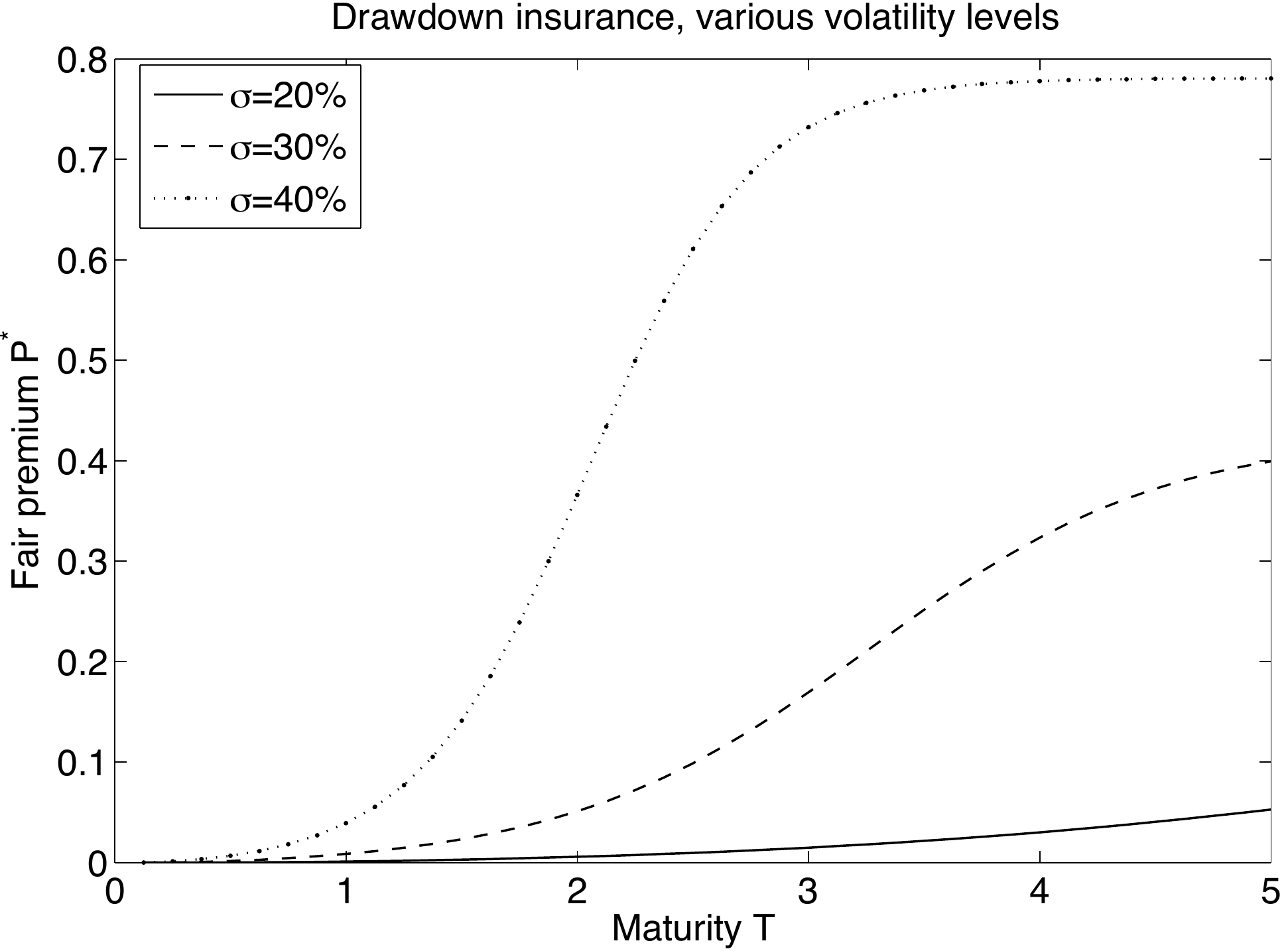}
        }
        \subfigure[(Perpetrual) fair premium vs. volatility $\sigma$]{
          \label{B2}
          \includegraphics[width=0.481\textwidth]{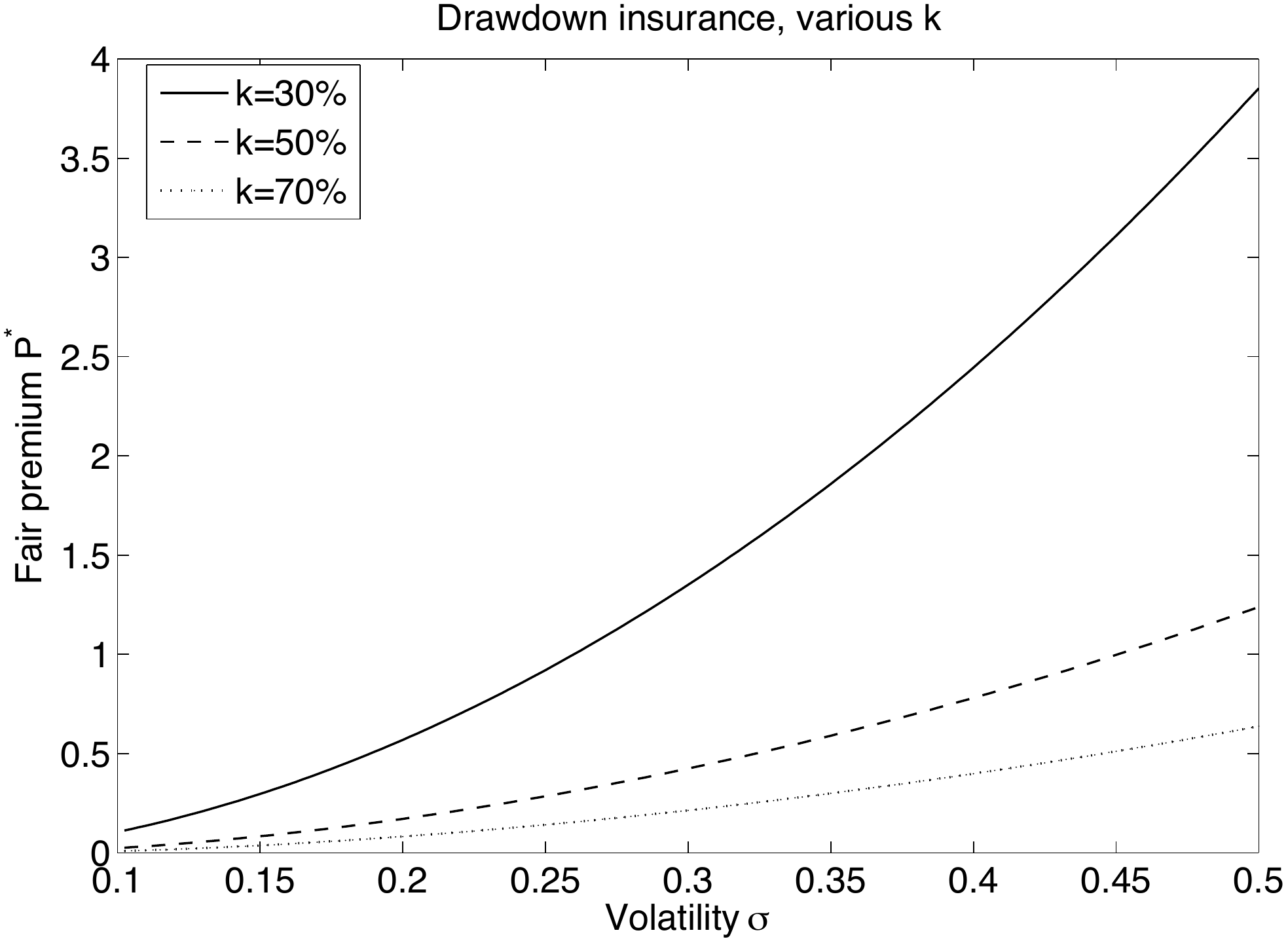}
        }
      \end{center}
      \begin{small}
      \caption{Left panel: the fair premium of a drawdown insurance is increasing with the maturity $T$. Right panel: the fair premium of a perpetual drawdown insurance also increases with volatility $\sigma$. The parameters are $r=2\%,  y=z=0.1$, $\alpha=1$, and $k=50\%$ (left).  }
    \label{fig2}  \end{small}
\end{figure}

\section{Drawdown Insurance on a Defaultable Stock}\label{sect-def}
In contrast to  a market index,  an individual stock may experience a large drawdown through continuous downward movement or  a sudden default event. Therefore, in  order to insure against the drawdown of the stock, it is useful  to incorporate the default risk into the stock price dynamics. To this end, we extend our analysis to a stock with reduced-form (intensity based) default risk.

Under the risk-neutral  measure $\Q$, the defaultable stock price $\tilde{S}$ evolves according to
\begin{equation}d\tilde{S}_t = (r + \lambda )\tilde{S}_t \, dt + \sigma \tilde{S}_t\, dW_t - \tilde{S}_{t-}\,dN_t,\qquad \tilde{S}_0=\tilde{s}>0,\label{s_def}
\end{equation}where  $\lambda$ is the constant default intensity for the single jump process $N_t=1_{\{ t \geq \zeta\}}$, with $\zeta \sim \exp(\lambda)$ independent of   the Brownian motion $W$ under $\Q$. At $\zeta$, the stock price immediately drops to zero and remains there permanently, i.e.\ for a.e.~$\omega \in \Omega$, $\tilde{S}_{t}(\omega) = 0,  \forall t \ge \zeta(\omega)$. Similar equity models have been considered e.g.\ in \cite{Merton_jump1976} and more recently \cite{Linetsky2006}, among others.

The drawdown events are defined similarly as in \eqref{tauDk} where the log-price is now given by
\begin{displaymath}
\label{theproblem} \tilde{X}_t = \left \{
\begin{array}{ll}
\log \tilde{S}_0 +( r+\lambda-\frac{1}{2}\sigma^2) t + \sigma W_t,  & t < \zeta \\
-\infty, & t \ge \zeta,
\end{array}
\right.
\end{displaymath}
We follow a similar definition of the  drawdown insurance contract  from Section \ref{sect-formulation}. One major effect of a default event is that   it  causes drawdown  and the contract will expire immediately. In the perpetual case, the premium payment is paid until $\tdk\wedge \tuk$ if it happens before both the default time $\zeta$ and the maturity $T$, or until the default time $\zeta$ if $T\ge \tdk\wedge\tuk\ge\zeta$. Notice that, if   no drawup or drawdown of size $k$ happens before   $\zeta$,  then the  drawdown time $\tdk$ will coincide with the  default time, i.e. $\tdk=\zeta$.  The expected value to the buyer is given by
\begin{align}\label{1def}v (y, z ;\, p) = \E^{y,z}\left\{- \int_0^{\tdk \wedge \tuk \wedge \zeta\wedge T}e^{-rt}p dt + \alpha e^{-r\tdk} \1_{\{\tdk\le \tuk\wedge \zeta\wedge T\}}\right\}.
\end{align} Again,  the stopping times $\tau_D(k)$ and $\tau_U(k)$ based on $\tilde{X}$ do not depend on $x$, and therefore, the contract value $v$ is a function of the initial drawdown $y$  and drawup $z$.

Under this defaultable stock model, we obtain the following useful formula for the fair premium.

\begin{proposition} The fair premium for a drawdown insurance maturing at $T$, written on the defaultable stock in \eqref{s_def} is given by
  \begin{align}\label{pstar_deft}
    P^*=\frac{\alpha\{rL_{r+\lambda}^{T}+\lambda-\lambda R_{r+\lambda}^{T}-\lambda e^{-(r+\lambda)T}\Q^{y,z}(\tdk\wedge\tuk\ge T)\}}{1-L_{r+\lambda}^{T}-R_{r+\lambda}^{T}-e^{-(r+\lambda)T}\Q^{y,z}(\tdk\wedge\tuk\ge T)},\end{align}
  where $L_{r+\lambda}^{T}$ and $R_{r+\lambda}^T$ are given in \eqref{truncatedL} and \eqref{Rr}, respectively.

\end{proposition}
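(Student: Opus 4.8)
The plan is to compute the expectation \eqref{1def} by conditioning on the independent exponential default time $\zeta$, separating the event $\{\zeta > \tdk\wedge\tuk\wedge T\}$ (no default before the other termination triggers) from its complement. On $\{\zeta > \tdk\wedge\tuk\wedge T\}$ the contract behaves exactly as the default-free contract of Section \ref{sect-drawuptoo}, except that every discount factor $e^{-rt}$ is replaced by $e^{-rt}\Q(\zeta>t\,|\,\F_t)=e^{-(r+\lambda)t}$ because $\zeta$ is independent $\exp(\lambda)$; thus all the quantities that appeared there reappear with $r$ replaced by $r+\lambda$, which is the source of the subscripts $r+\lambda$ in $L^T_{r+\lambda}$ and $R^T_{r+\lambda}$. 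On the complementary event $\{\zeta \le \tdk\wedge\tuk\wedge T\}$, the default itself triggers the drawdown, so $\tdk=\zeta\le\tuk\wedge T$ and the indicator $\1_{\{\tdk\le\tuk\wedge\zeta\wedge T\}}$ equals $1$; the payoff contribution is $-\int_0^\zeta e^{-rt}p\,dt+\alpha e^{-r\zeta}$.

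First I would write $v(y,z;p)=\E^{y,z}\{(\cdots)\1_{\{\zeta>\tdk\wedge\tuk\wedge T\}}\}+\E^{y,z}\{(\cdots)\1_{\{\zeta\le\tdk\wedge\tuk\wedge T\}}\}$. For the first term, Fubini together with independence of $\zeta$ gives
\[
\E^{y,z}\Big\{-\!\int_0^{\tdk\wedge\tuk\wedge T}\!\!e^{-rt}p\,dt\cdot\1_{\{\zeta>\tdk\wedge\tuk\wedge T\}}+\alpha e^{-r\tdk}\1_{\{\tdk\le\tuk\wedge T\}}\1_{\{\zeta>\tdk\wedge\tuk\wedge T\}}\Big\}.
\]
Conditioning on $\F_\infty$ and using $\Q(\zeta>s)=e^{-\lambda s}$ turns this into an expression built from $\E^{y,z}\{e^{-(r+\lambda)(\tdk\wedge\tuk\wedge T)}\}$ and $\E^{y,z}\{e^{-(r+\lambda)\tdk}\1_{\{\tdk\le\tuk\wedge T\}}\}$, i.e.\ from $L^T_{r+\lambda}+R^T_{r+\lambda}+e^{-(r+\lambda)T}\Q^{y,z}(\tdk\wedge\tuk\ge T)$ and $L^T_{r+\lambda}$ respectively (one has to be careful with the $-\frac{p}{r+\lambda}$ that comes from integrating the premium stream, and with the boundary contribution on $\{\tdk\wedge\tuk\ge T\}$). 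For the second term, conditioning on $\F_\infty$ and integrating the $\exp(\lambda)$ density of $\zeta$ over $[0,\tdk\wedge\tuk\wedge T]$ yields, after evaluating $\int_0^s \lambda e^{-\lambda u}\big(-\frac{p}{r}(1-e^{-ru})+\alpha e^{-ru}\big)du$, another combination of the same three building blocks. Collecting all terms gives $v(y,z;p)$ as an affine function of $p$, and then solving $v(y,z;P^*)=0$ produces \eqref{pstar_deft}; the $r\alpha L^T_{r+\lambda}+\lambda(1-R^T_{r+\lambda}-e^{-(r+\lambda)T}\Q^{y,z}(\cdots))$ numerator is exactly what one expects, the $\lambda$-terms being the value of the $\alpha$-payoff delivered at the default-induced drawdown.

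The main obstacle is bookkeeping rather than any deep idea: one must carefully track the premium-integral term, whose contribution to $v$ is $-\frac{p}{r+\lambda}\big(1-L^T_{r+\lambda}-R^T_{r+\lambda}-e^{-(r+\lambda)T}\Q^{y,z}(\tdk\wedge\tuk\ge T)\big)$ after combining the "no default" and "default" pieces (the two pieces share the same denominator, which is why it appears cleanly in \eqref{pstar_deft}), and separately the $\alpha$-payoff term, which splits into the genuine drawdown payoff $\alpha L^T_{r+\lambda}$ from the first event and the default-triggered payoff $\alpha\lambda\int\!\cdots$ from the second. One also needs to confirm that on $\{\zeta\le\tdk\wedge\tuk\wedge T\}$ the event $\{\tdk\le\tuk\wedge\zeta\wedge T\}$ indeed holds with $\tdk=\zeta$ (immediate from $\tilde X_t\to-\infty$ at $\zeta$, so $D_t$ jumps to $+\infty\ge k$), which justifies why the $\alpha$-payoff is collected on the \emph{entire} complementary event. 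Once these pieces are assembled and the common factor is cancelled, \eqref{pstar_deft} follows, with $L^T_{r+\lambda}$, $R^T_{r+\lambda}$ given by \eqref{truncatedL} and \eqref{Rr} evaluated at discount rate $r+\lambda$.
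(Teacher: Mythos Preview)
Your proposal is correct and uses essentially the same idea as the paper: condition on the independent exponential default time $\zeta$ and observe that, on $\{\zeta\le\tdk\wedge\tuk\wedge T\}$, the jump to $-\infty$ forces $\tdk=\zeta$ so the $\alpha$-payoff is collected, while on the complement all discount factors pick up an extra $e^{-\lambda t}$, replacing $r$ by $r+\lambda$ throughout. The only organizational difference is that the paper first uses $\int_0^\tau e^{-rt}p\,dt=\tfrac{p}{r}(1-e^{-r\tau})$ to reduce $v=0$ to the ratio $P^*=\dfrac{r\alpha\,\E^{y,z}\{e^{-r\tdk}\1_{\{\tdk\le\tuk\wedge\zeta\wedge T\}}\}}{1-\E^{y,z}\{e^{-r(\tdk\wedge\tuk\wedge\zeta\wedge T)}\}}$ and then evaluates numerator and denominator separately by the same conditioning argument, whereas you compute $v$ directly and solve at the end; the resulting algebra is identical (indeed $1-\E^{y,z}\{e^{-r(\cdots)}\}=\tfrac{r}{r+\lambda}\bigl(1-L^T_{r+\lambda}-R^T_{r+\lambda}-e^{-(r+\lambda)T}\Q^{y,z}(\tdk\wedge\tuk\ge T)\bigr)$, which is exactly your $-\tfrac{p}{r+\lambda}(\cdots)$ premium coefficient after dividing by $-p/r$).
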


\begin{proof}
As seen in \eqref{pstar}, the fair premium $P^*$ satisfies
\begin{align}
  P^*=\frac{r\alpha \E^{y,z}\{e^{-r\tdk}\1_{\{\tdk\le \tuk \wedge \zeta \wedge T\}}\}}{1-\E^{y,z}\{e^{-r(\tdk\wedge \tuk\wedge\zeta\wedge T)}\}}.
\end{align}
We first compute the expectation in the numerator.
\begin{align}
  \notag &\E^{y,z}\{e^{-r\tdk}\1_{\{\tdk\le \tuk \wedge \zeta \wedge T\}}\}\\
  =&\int_0^T\lambda e^{-\lambda t}\E^{y,z}\{e^{-r\tdk}\1_{\{\tdk\le \tuk \wedge t\}}\}dt+\E^{y,z}\{e^{-r\tdk}\1_{\{\tdk \le \tuk\wedge T\}}\}\cdot\Q^{y,z}(\zeta>T)+\notag\\
&+\E^{y,z}\{e^{-r\zeta}\1_{\{\tdk\wedge \tuk\ge \zeta, \zeta<T\}}\}\notag \\
  =&\int_0^T e^{-(r+\lambda)s}\frac{\partial}{\partial s}\Q^{y,z}(\tdk\le \tuk\wedge s)ds+\int_0^T \lambda e^{-(r+\lambda)t}\Q^{y,z}(\tdk\wedge\tuk \ge t)dt\notag\\
  =&L_{r+\lambda}^{T}+\frac{\lambda}{r+\lambda}\bigg\{1-e^{-(r+\lambda)T}\Q^{y,z}(\tdk\wedge\tuk\ge T)+\int_0^T e^{-(r+\lambda)t}\frac{\partial}{\partial t}\Q^{y,z}(\tdk\wedge\tuk\ge t)dt\bigg\}\notag\\
  =&L_{r+\lambda}^{T}+\frac{\lambda}{r+\lambda}\{1-e^{-(r+\lambda)T}\Q^{y,z}(\tdk\wedge\tuk\ge T)-L_{r+\lambda}^{T}-R_{r+\lambda}^{T}\}.\label{pf1t}
\end{align}
Next,  the Laplace transform of $\tdk\wedge \tuk\wedge\zeta\wedge T$ is given by
\begin{align}
  \notag &\E^{y,z}\{e^{-r(\tdk\wedge \tuk\wedge\zeta\wedge T)}\}\\
  =&\E^{y,z}\{e^{-r(\tdk\wedge\tuk)}\1_{\{\tdk\wedge \tuk< \zeta\wedge T\}}\}+e^{-rT}\Q^{y,z}(\tdk\wedge\tuk>T,\zeta>T)\notag\\
  &+\E^{y,z}\{e^{-r\zeta}\1_{\{\tdk\wedge \tuk\ge \zeta, \zeta<T\}}\}\notag \\
  =&\int_0^T e^{-(r+\lambda)s}\frac{\partial}{\partial s}\Q^{y,z}(\tdk\wedge\tuk\le s)ds+e^{-(r+\lambda)T}\Q^{y,z}(\tdk\wedge\tuk>T)+\notag\\
&+\frac{\lambda}{r+\lambda}\{1-e^{-(r+\lambda)T}\Q^{y,z}(\tdk\wedge\tuk\ge T)-L_{r+\lambda}^{T}-R_{r+\lambda}^{T}\}\notag\\
=&L_{r+\lambda}^{T}+R_{r+\lambda}^{T}+\frac{\lambda}{r+\lambda}\{1-L_{r+\lambda}^{T}-R_{r+\lambda}^{T}\}+e^{-(r+\lambda)T}\frac{r}{r+\lambda}\Q^{y,z}(\tdk\wedge\tuk\ge T)\notag\\
  =&\frac{\lambda}{r+\lambda}+\frac{r}{r+\lambda}\{L_{r+\lambda}^{T}+R_{r+\lambda}^{T}+e^{-(r+\lambda)T}\Q^{y,z}(\tdk\wedge\tuk\ge T)\}.\label{pf2t}
  \end{align}
  Rearranging \eqref{pf1t} and \eqref{pf2t} yields \eqref{pstar_deft}.\end{proof}

By taking $T \to \infty$ in \eqref{pstar_deft}, we obtain  the fair premium for the perpetual drawdown insurance  in closed form.
 \begin{proposition} The fair premium for the perpetual drawdown insurance  written on the defaultable stock in \eqref{s_def} is given by
   \begin{align}\label{pstar_def}P^*=\frac{\alpha \left(rL_{r+\lambda}^\infty+\lambda-\lambda R_{r+\lambda}^\infty\right)}{1-L_{r+\lambda}^\infty-R_{r+\lambda}^\infty},\end{align} where $L_{r+\lambda}^\infty$ and $R_{r+\lambda}^\infty$ are given in \eqref{L}.
 \end{proposition}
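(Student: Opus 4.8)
The plan is to derive \eqref{pstar_def} by letting $T\to\infty$ in the finite-maturity fair premium \eqref{pstar_deft}. Since $\lambda>0$ and $r\ge 0$, we have $r+\lambda>0$, so $e^{-(r+\lambda)T}\to 0$; together with the bound $\Q^{y,z}(\tdk\wedge\tuk\ge T)\le 1$, this kills the boundary term $e^{-(r+\lambda)T}\Q^{y,z}(\tdk\wedge\tuk\ge T)$ appearing in both the numerator and the denominator of \eqref{pstar_deft}.

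Next I would argue that $L_{r+\lambda}^{T}\uparrow L_{r+\lambda}^{\infty}$ and $R_{r+\lambda}^{T}\uparrow R_{r+\lambda}^{\infty}$ as $T\uparrow\infty$. Writing $L_{r+\lambda}^{T}=\E^{y,z}\{e^{-(r+\lambda)\tdk}\1_{\{\tdk\le\tuk\wedge T\}}\}$ and the analogous drawup-first expression for $R_{r+\lambda}^{T}$, the indicators increase pointwise to $\1_{\{\tdk\le\tuk\}}$ and $\1_{\{\tuk\le\tdk\}}$ respectively, so monotone convergence yields convergence to $L_{r+\lambda}^{\infty}=\E^{y,z}\{e^{-(r+\lambda)\tdk}\1_{\{\tdk<\tuk\}}\}$ and $R_{r+\lambda}^{\infty}=\E^{y,z}\{e^{-(r+\lambda)\tuk}\1_{\{\tuk<\tdk\}}\}$ (the event $\{\tdk=\tuk\}$ being $\Q^{y,z}$-null). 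These perpetual quantities coincide with those of Proposition \ref{prop-perp1} after replacing $r$ by $r+\lambda$, and are therefore available in closed form through \eqref{L}.

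Passing to the limit in the numerator and the denominator of \eqref{pstar_deft} separately then produces \eqref{pstar_def}, once we check that the limiting denominator $1-L_{r+\lambda}^{\infty}-R_{r+\lambda}^{\infty}$ is nonzero. This is immediate from $L_{r+\lambda}^{\infty}+R_{r+\lambda}^{\infty}=\E^{y,z}\{e^{-(r+\lambda)(\tdk\wedge\tuk)}\}$: since $\tdk\wedge\tuk>0$ almost surely (because $0\le y,z<k$) and $r+\lambda>0$, this expectation is strictly below $1$.

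The computation is entirely routine, so I do not expect a genuine obstacle. The only two points meriting a sentence of care are the exchange of limit and expectation, handled cleanly by monotone convergence since the integrands are nonnegative, and the non-vanishing of the denominator in the limit, which guarantees that the limit of the quotient equals the quotient of the limits.
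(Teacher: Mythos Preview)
Your proposal is correct and follows exactly the route the paper takes: the paper simply states that \eqref{pstar_def} is obtained ``by taking $T\to\infty$'' in \eqref{pstar_deft}, without spelling out the limiting argument. Your write-up supplies precisely the details one would want---the vanishing of the boundary term via $e^{-(r+\lambda)T}\to 0$, monotone convergence for $L_{r+\lambda}^{T}\uparrow L_{r+\lambda}^{\infty}$ and $R_{r+\lambda}^{T}\uparrow R_{r+\lambda}^{\infty}$, and the strict positivity of the limiting denominator---so there is nothing to add.
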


In Figure \ref{fig_def}, we illustrate  for the perpetual case that the fair premium  is increasing with the default intensity $\lambda$ and  approaches $\alpha \lambda$ for high default risk. This observation, which can be formally shown by taking the limit in \eqref{pstar_def}, is intuitive since high default risk implies that a drawdown will more likely happen and that it is most likely triggered by a default.

\begin{figure}
  \begin{center}
    \includegraphics[width=3in]{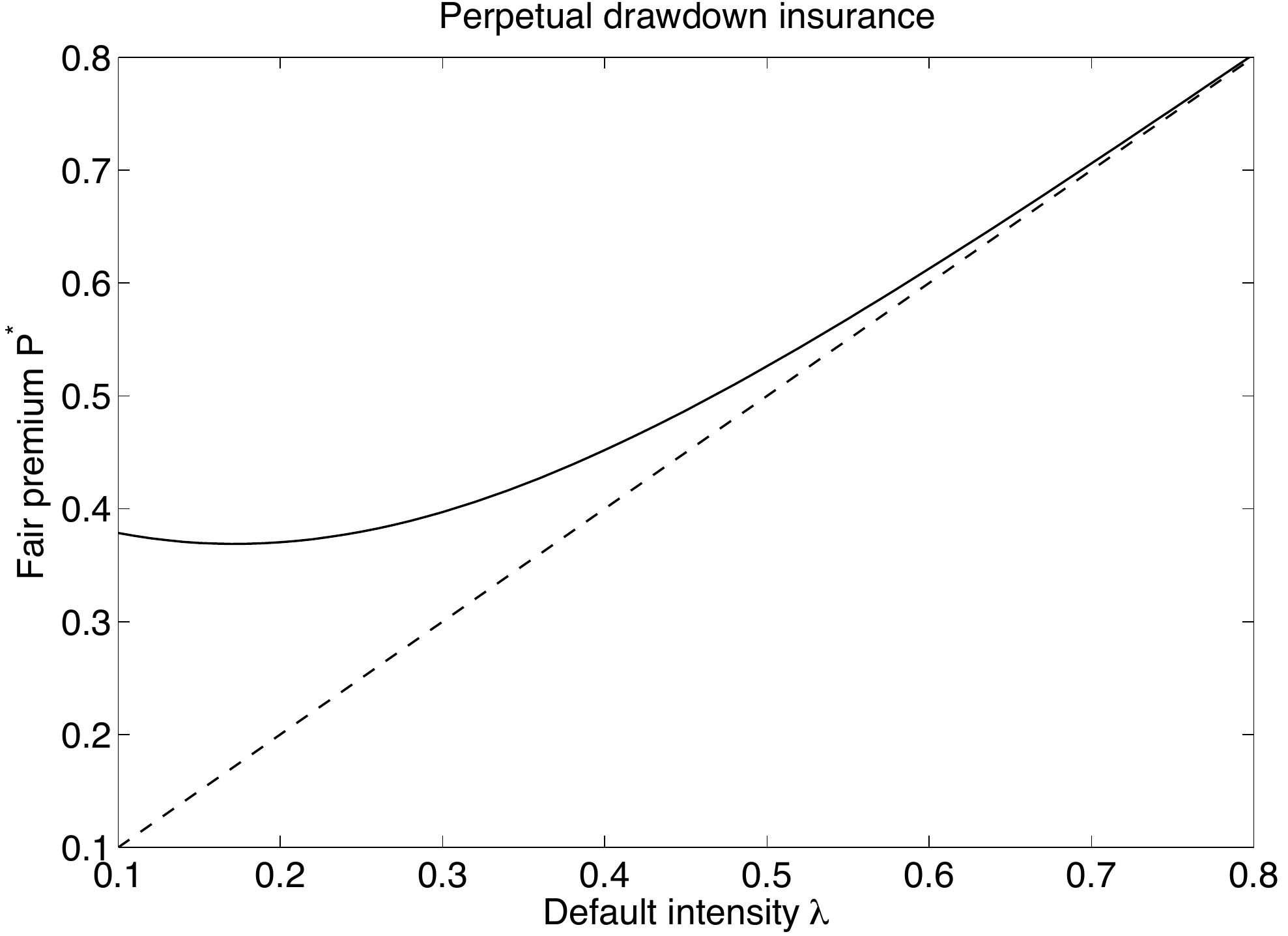}
    \begin{small}
    \caption{The fair premium (solid) as a function of the default intensity $\lambda$, which dominates the straight dashed line $\alpha \lambda$. As $\lambda\to\infty$, the fair premium $P^*\to\alpha\lambda$. Parameters: $r=2\%, \sigma=30\%, y=z=0.1,  k=0.5, \alpha=1$.   }\label{fig_def}\end{small}
  \end{center}
  \end{figure}

\section{Conclusions}\label{sect-conclude} We have studied the practicality of insuring against market crashes and proposed a number of tractable  ways to value drawdown protection. Under the geometric Brownian motion dynamics, we provided the formulas for the fair premium for  a number of insurance contracts, and examine its behavior with respect to key model parameters. In the cancellable drawdown insurance, we showed that the protection buyer would monitor the drawdown process and optimally stop  the premium payment as the drawdown  risk diminished.   Also, we investigated the impact of  default risk on  drawdown and derived  analytical formulas for  the fair premium.

For future research, we envision that the valuation and optimal stopping problems herein can be studied under other price dynamics, especially when  drawdown formulas, e.g.  for  Laplace transforms and hitting time distributions, are available (see \cite{PospVeceHadj} for the diffusion case).   Although we have  focused our analysis   on drawdown insurance written on a single underlying  asset, it is both interesting  and challenging  to model drawdowns across   multiple financial markets, and investigate the systemic impact of a drawdown occurred in one market. This would involve modeling the interactions among various financial markets \cite{Eisenberg} and developing new measures of systemic risk \cite{Markus}. Lastly, the idea of market drawdown and  the associated mathematical tools can also be useful in other areas, such as portfolio optimization problems \cite{GrosZhou,CvitKara}, risk management \cite{ChekUryaZaba}, and signal detection \cite{Poor2008}.\\

\textbf{Acknowledgement.} We are grateful to the seminar participants at Johns Hopkins University and
Columbia University. We also thank  INFORMS for the Best Presentation Award for this work at the 2011 Annual Meeting. Tim Leung's work is  supported by NSF grant DMS-0908295. Olympia Hadjiliadis' work is  supported by NSF grants CCF-MSC-0916452, DMS-1222526 and PSC-CUNY grant 65625-00 43. Finally, we thank   the Editor and anonymous  referees for  their useful remarks and suggestions.

\section{Proof of Lemmas}\label{sect_proofs}
  \subsection{Conditional Laplace Transform of Drawdown Time}\label{sect-condtranf}
  In order to prepare for our subsequent proofs on the cancellable drawdown insurance in Section \ref{sect-cancellableinsurance}, we now summarize a number properties of the conditional Laplace transform of $\tau_D(k)$ (see \eqref{def_xi}).

\begin{proposition}
The conditional Laplace transform function $\xi(\cdot)$ has the following properties:
\begin{enumerate}
\item $\xi(\cdot)$ is positive and  increasing on $(0,k)$.
\item $\xi(\cdot)$ satisfies differential equation
  \begin{eqnarray}\label{ode}\frac{1}{2}\sigma^2\xi''(y)-\mu\xi'(y)=r\xi(y),\end{eqnarray}
  with the Neumann condition
  \[\xi^{'}(0)=0.\]
\item $\xi(\cdot)$ is strictly convex, i.e., $\xi^{''}(y)>0$ for $y\in(0,k)$.
  \end{enumerate}
\end{proposition}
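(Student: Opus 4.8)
The plan is to prove the three properties in the order (2) $\to$ (1) $\to$ (3), each time bootstrapping from the previous one with the help of the linear second-order ODE \eqref{ode}. For the ODE and the Neumann condition, I would first record the boundary value $\xi(k)=1$, which holds because $\tau_D(k)=0$ when $D_0=k$, and then argue \eqref{ode} in either of two ways. Probabilistically, on the interior $(0,k)$ the drawdown process $D_t=\overline X_t-X_t$ is a Brownian motion with drift $-\mu$ and variance rate $\sigma^2$ reflected at the origin, with generator $\mathcal A=\tfrac12\sigma^2\partial_{yy}-\mu\partial_y$ acting on $C^2$ functions satisfying $f'(0)=0$; the Dynkin/Feynman--Kac characterization of $\E_y\{e^{-r\tau_D(k)}\}$ as the bounded solution of $\mathcal Au=ru$ on $(0,k)$ with $u(k)=1$ and the reflecting condition $u'(0)=0$ is precisely \eqref{ode}. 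Self-containedly, one may instead verify \eqref{ode} directly from the closed form \eqref{zdprop}: the characteristic equation $\tfrac12\sigma^2\lambda^2-\mu\lambda-r=0$ has roots $\tfrac{\mu}{\sigma^2}\pm\Xi_{\mu,\sigma}^r$, so $e^{\frac{\mu}{\sigma^2}y}\cosh(\Xi_{\mu,\sigma}^r y)$ and $e^{\frac{\mu}{\sigma^2}y}\sinh(\Xi_{\mu,\sigma}^r y)$ both solve it, and expanding $\sinh(\Xi_{\mu,\sigma}^r(k-y))$ via the addition formula exhibits $\xi$ as a linear combination of these two; a one-line differentiation of \eqref{zdprop} at $y=0$, using the Lehoczky value of $\xi(0)$ recorded above, then gives $\xi'(0)=0$.

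Granting the ODE, positivity in (1) is immediate from $\xi(y)=\E_y\{e^{-r\tau_D(k)}\}>0$. For the monotonicity $\xi'>0$ on $(0,k)$, I would evaluate \eqref{ode} at $y=0$: since $\xi'(0)=0$ this yields $\xi''(0)=\tfrac{2r}{\sigma^2}\xi(0)>0$, so $\xi'>0$ just to the right of the origin. If $\xi'$ failed to stay positive, let $y_0\in(0,k)$ be its first zero; then $\xi$ is increasing on $(0,y_0)$, so $\xi(y_0)>\xi(0)>0$, and \eqref{ode} forces $\xi''(y_0)=\tfrac{2r}{\sigma^2}\xi(y_0)>0$; but $\xi'(y_0)=0$ together with $\xi''(y_0)>0$ forces $\xi'<0$ immediately to the left of $y_0$, contradicting $\xi'>0$ on $(0,y_0)$. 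Hence $\xi'>0$ throughout $(0,k)$. (A pathwise coupling gives (1) directly as well, since with $W$ and $X$ fixed the drawdown $D^{(y)}_t=\max\{y,\sup_{s\le t}(X_s-x)\}-(X_t-x)$ is pointwise nondecreasing in the initial value $y$, hence $\tau_D(k)$ is pathwise nonincreasing in $y$ and strictly so with positive probability; but the ODE argument is shorter and will be reused.)

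For strict convexity in (3), the key observation is that differentiating \eqref{ode} shows $w:=\xi'$ solves the \emph{same} equation $\tfrac12\sigma^2 w''-\mu w'-rw=0$ on $(0,k)$, now with $w(0)=0$ from the Neumann condition, $w'(0)=\xi''(0)=\tfrac{2r}{\sigma^2}\xi(0)>0$, and $w>0$ on $(0,k)$ by the step just completed. Running the first-zero argument verbatim with $w$ in place of $\xi$: if $w'$ vanished at some point of $(0,k)$, then at its first zero $y_1$ the function $w$ would be increasing on $(0,y_1)$, whence $w(y_1)>w(0)=0$ and, by the ODE for $w$, $w''(y_1)=\tfrac{2r}{\sigma^2}w(y_1)>0$, which is incompatible with $w'(y_1)=0$. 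Therefore $\xi''=w'>0$ on $(0,k)$. I expect this last step to be the only genuinely delicate point: reading the sign of $\xi''=\tfrac{2}{\sigma^2}(\mu\xi'+r\xi)$ off \eqref{ode} directly fails when $\mu<0$, and the fix is exactly the remark that $\xi'$ satisfies the same ODE as $\xi$, allowing the monotonicity argument to be applied one derivative higher; everything else is a direct consequence of the closed form \eqref{zdprop} or a short maximum-principle-type argument.
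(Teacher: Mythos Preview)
Your argument is correct, and for part (iii) it differs from the paper's in a way worth noting. The paper handles strict convexity by a case split on the sign of $\mu$: when $\mu\ge0$ the ODE rewritten as $\xi''=\tfrac{2\mu}{\sigma^2}\xi'+\tfrac{2r}{\sigma^2}\xi$ is manifestly positive by (i); when $\mu<0$ the authors go back to the closed form \eqref{zdprop}, compute $\xi'(y)=(\Xi_{\mu,\sigma}^r+\tfrac{\mu}{\sigma^2})\big(\xi(y)-e^{(\mu/\sigma^2-\Xi_{\mu,\sigma}^r)y}\xi(0)\big)$ explicitly, substitute into the ODE, and read off $\xi''(y)>0$ from the signs of the two resulting terms. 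Your route avoids both the case split and the explicit formula: because the ODE has constant coefficients, $w:=\xi'$ satisfies the \emph{same} equation, inherits $w(0)=0$, $w'(0)>0$, $w>0$ on $(0,k)$ from parts (i)--(ii), and then the first-zero argument you already used for monotonicity applies verbatim one derivative higher. This is cleaner and would carry over unchanged to any constant-coefficient equation $\tfrac12\sigma^2u''-\mu u'-ru=0$ with $r>0$; the paper's computation, by contrast, leans on the specific structure of \eqref{zdprop} but in exchange gives the explicit expression for $\xi'$ that is reused later in Lemma~\ref{lem1}. For part (i) the paper simply invokes the definition and the strong Markov property (continuity of paths forces $D$ started at $y_1<y_2$ to hit $y_2$ before $k$, so $\xi(y_1)=\E\{e^{-r\tau^{-}_D(y_2)}\}\,\xi(y_2)<\xi(y_2)$), whereas you derive it from the ODE; either is fine.
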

\begin{proof} Property (i) follows directly from the definition of $\xi(y)$ and strong Markov property. Property (ii) follows directly from differentiation of \eqref{zdprop}. For property (iii), the proof is as follows. If $\mu\ge 0$, then  (\ref{ode}) implies that
  \[\xi^{''}(y)=\frac{2\mu}{\sigma^2}\xi^{'}(y)+\frac{2r}{\sigma^2}\xi(y)>0, \quad y\in(0,k).\]
  If $\mu<0$, then  (\ref{zd})  and (\ref{ode}) imply that for $y\in(0,k)$,
  \begin{eqnarray}
\xi^{'}(y)&=&\bigg(\Xi_{\mu,\sigma}^r+\frac{\mu}{\sigma^2}\bigg)\left(\xi(y)-e^{(\frac{\mu}{\sigma^2}-\Xi_{\mu,\sigma}^r)y}\xi(0)\right),\\
    \xi^{''}(y)&=&\frac{2\mu}{\sigma^2}\xi^{'}(y)+\frac{2r}{\sigma^2}\xi(y)\notag\\
    &=&\bigg(\Xi_{\mu,\sigma}^r+\frac{\mu}{\sigma^2}\bigg)^2\xi(y)-\frac{2\mu}{\sigma^2}\bigg(\Xi_{\mu,\sigma}^r+\frac{\mu}{\sigma^2}\bigg)
    e^{(\frac{\mu}{\sigma^2}-\Xi_{\mu,\sigma}^r)y}\xi(0)\,>0.
    \end{eqnarray}The last inequality follows from the fact that $\mu<0$ and $\Xi_{\mu,\sigma}^r+ \frac{\mu}{\sigma^2}>0$.  Hence, strict convexity follows.
\end{proof}

\subsection{Proof of Lemma \ref{prop-unique}}\label{proof-prop-unique}
In view of \eqref{pasting}, we seek the root  $\theta^*$ of the equation:
\begin{align}\label{Ftheta}F(\theta)\mathop{:=}\bigg(\frac{\mu}{\sigma^2}-\Xi_{\mu,\sigma}^r\coth(\Xi_{\mu,\sigma}^r(k-\theta))\bigg)\tilde{f}(\theta)-\tilde{f}^{'}(\theta) =0 .\end{align}
 To this end, we compute
\begin{align}\label{Fprime}F^{'}(\theta)=-\frac{(\s)^2}{(\sinh(\s(k-\theta))^2}\tilde{f}(\theta)-\bigg(\Xi_{\mu,\sigma}^r\coth(\Xi_{\mu,\sigma}^r(k-\theta))
-\frac{\mu}{\sigma^2}\bigg)\tilde{f}^{'}(\theta)-\tilde{f}^{''}(\theta).\end{align}
Since $f$ is monotonically decreasing from $\tilde{f}(0)>0$ to $\tilde{f}(k)=-\alpha-c<0$, there exists a unique $\theta_0\in(0,k)$ such that $\tilde{f}(\theta_0)=0$. We have   $F(\theta_0)=-\tilde{f}^{'}(\theta_0)>0$ by \eqref{Ftheta} and $F(0)=(\frac{\mu}{\sigma}-\s\coth(\s k))\tilde{f}(0)<0$, which implies  that $F(\theta)=0$ has at least one solution $\theta^*\in(0,\theta_0)$.  Moreover, for  $\theta\in(\theta_0,k)$, $\tilde{f}(\theta)<0$ and hence $F(\theta)>0$ by \eqref{Ftheta}, there is no root in $(\theta_0,k)$.

Next,  we show the root is unique by proving that $F^{'}(\theta)>0$ for all $\theta\in (0,\theta_0)$.   To this end, we first observe from \eqref{F1} that $\tilde{f}$ can be expressed as  $\tilde{f}(\theta)=C(\xi(\theta_0)-\xi(\theta))$, for $\theta,\theta_0\in(0,k)$, where $C=(\alpha+\frac{p}{r})>0$ and $C\xi(\theta_0)=\frac{p}{r}-c$. Putting these into \eqref{Fprime}, we express $F^{'}(\theta)$ in terms $\xi$ instead of $\tilde{f}$. In turn, verifying $F^{'}(\theta)>0$ is reduced to:

\begin{lemma}\label{lem1}
  \[\inf_{0<\theta<\theta_0<k}
  \left\{\xi^{''}(\theta)+\left(\s\coth(\s(k-\theta))-\frac{\mu}{\sigma^2}\right)\xi^{'}(\theta)+\frac{(\s)^2}{(\sinh(\s(k-\theta)))^2}(\xi(\theta)-\xi(\theta_0))\right\}\ge0,\]
and the infimum is attained at $\theta = \theta_0=k$.\end{lemma}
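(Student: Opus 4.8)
Throughout write $\Xi=\s$ and $m=\mu/\sigma^2$, so $\Xi^2=\tfrac{2r}{\sigma^2}+m^2$ and the equation \eqref{ode} reads $\xi''=2m\xi'+(\Xi^2-m^2)\xi$. Call the displayed infimand $\Psi_{\theta_0}(\theta)$. The plan is to reduce $\Psi_{\theta_0}(\theta)\ge 0$ to a one–variable hyperbolic inequality that follows from convexity of the exponential. I would start by recording the closed forms one gets from \eqref{zdprop} (equivalently, by solving \eqref{ode} under $\xi'(0)=0$, $\xi(k)=1$):
\[
\xi(y)=\frac{\xi(0)}{\Xi}e^{my}\bigl(\Xi\cosh(\Xi y)-m\sinh(\Xi y)\bigr),\qquad \xi'(y)=\frac{(\Xi^2-m^2)\xi(0)}{\Xi}e^{my}\sinh(\Xi y),
\]
with $\xi(0)=\Xi e^{-mk}/(\Xi\cosh(\Xi k)-m\sinh(\Xi k))>0$; note $\xi'>0$ on $(0,k)$ and $\Xi>|m|$ since $r>0$.

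The first reduction is in $\theta_0$. Since $\Psi_{\theta_0}(\theta)$ depends on $\theta_0$ only through the term $-\tfrac{\Xi^2}{\sinh^2(\Xi(k-\theta))}\xi(\theta_0)$, we get $\partial_{\theta_0}\Psi_{\theta_0}(\theta)=-\tfrac{\Xi^2}{\sinh^2(\Xi(k-\theta))}\xi'(\theta_0)<0$, so for fixed $\theta$ the infimand strictly decreases in $\theta_0$ and it suffices to treat the limiting case $\theta_0\uparrow k$, where $\xi(\theta_0)\to\xi(k)=1$. Thus the task becomes to show $\Psi_k(\theta)\ge 0$ on $(0,k)$, where $\Psi_k(\theta):=\xi''(\theta)+\bigl(\Xi\coth(\Xi(k-\theta))-m\bigr)\xi'(\theta)+\tfrac{\Xi^2}{\sinh^2(\Xi(k-\theta))}(\xi(\theta)-1)$. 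Using $\xi''=2m\xi'+(\Xi^2-m^2)\xi$ this becomes $\Psi_k(\theta)=\bigl(m+\Xi\coth(\Xi(k-\theta))\bigr)\xi'(\theta)+(\Xi^2-m^2)\xi(\theta)+\tfrac{\Xi^2}{\sinh^2(\Xi(k-\theta))}(\xi(\theta)-1)$, in which the first two terms are positive and only the last is negative (as $\xi(\theta)<1$).

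Now substitute the closed forms together with $1=\xi(k)=\tfrac{\xi(0)}{\Xi}e^{mk}(\Xi\cosh(\Xi k)-m\sinh(\Xi k))$ and clear the positive factors ($\xi(0)$, $\sinh^2(\Xi(k-\theta))$, $\Xi\cosh(\Xi k)-m\sinh(\Xi k)$, $e^{m\theta}$). Introducing $a=\Xi\theta$, $b=\Xi(k-\theta)$ with $a+b=\Xi k$ and collapsing the products $\cosh a\cosh b\pm\sinh a\sinh b$ via the addition formulas, a short computation shows that $\Psi_k(\theta)\ge 0$ is equivalent to
\[
\Xi\cosh(\Xi u)+m\sinh(\Xi u)\ \ge\ \Xi e^{mu},\qquad u:=k-\theta\ge 0,
\]
because the cross terms assemble into $\Xi\cosh(\Xi k)-m\sinh(\Xi k)$, which then factors out. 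This last inequality is convexity of $t\mapsto e^{t}$: the left side equals $\Xi\bigl[\tfrac{\Xi+m}{2\Xi}e^{\Xi u}+\tfrac{\Xi-m}{2\Xi}e^{-\Xi u}\bigr]$, a convex combination of $e^{\Xi u}$ and $e^{-\Xi u}$ with weights in $(0,1)$, so Jensen gives $\ge\Xi\exp\bigl(\tfrac{\Xi+m}{2\Xi}(\Xi u)+\tfrac{\Xi-m}{2\Xi}(-\Xi u)\bigr)=\Xi e^{mu}$, with equality iff $u=0$. Hence $\Psi_{\theta_0}(\theta)\ge 0$ on $\{0<\theta<\theta_0<k\}$, strictly in the interior, and the governing inequality degenerates to an equality exactly in the corner $\theta=\theta_0=k$ ($u=0$), matching the statement. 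Feeding this back through $F'(\theta)=C\,\Psi_{\theta_0}(\theta)$ with $C=\alpha+\tfrac pr>0$, together with $F(0)<0<F(\theta_0)=-\tilde f'(\theta_0)$, yields the uniqueness in Lemma \ref{prop-unique}.

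The main obstacle is the hyperbolic bookkeeping in the third step: a direct expansion produces many terms, and the cancellations that make $\Xi\cosh(\Xi k)-m\sinh(\Xi k)$ a common factor only become transparent after the substitution $a=\Xi\theta$, $b=\Xi(k-\theta)$ and repeated use of $\cosh(a+b)=\cosh a\cosh b+\sinh a\sinh b$ etc.; one also needs to keep track of which of $\Xi\pm m$, $\Xi^2-m^2$, $\Xi\cosh(\Xi k)-m\sinh(\Xi k)$ are positive, all of which rely on $\Xi>|m|$. A minor secondary point is the behaviour as $\theta\uparrow k$, where $\coth(\Xi(k-\theta))\to\infty$ and $\sinh^{-2}(\Xi(k-\theta))\to\infty$ while $\xi(\theta)-1\to0$; the apparent $\infty-\infty$ is resolved by the same closed forms (a second–order Taylor expansion shows the limit equals $r/\sigma^2>0$), which is consistent with the infimum being attained only in the limit at $\theta=\theta_0=k$.
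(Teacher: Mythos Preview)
Your proof is correct and reaches the same destination as the paper, but by a somewhat different route. The paper introduces the auxiliary function $\Lambda(y)=e^{-my}\xi(y)/\sinh(\Xi(k-y))$ and uses the strong Markov relation for $\xi$ to show $\Lambda'(y)=\Xi e^{-mk}/\sinh^2(\Xi(k-y))>0$; this lets them factor the infimand as $\Lambda'(\theta)\,H(\theta,\theta_0)$ with $H(\theta,\theta_0)=e^{m\theta}\bigl(\Xi\cosh(\Xi(k-\theta))+m\sinh(\Xi(k-\theta))\bigr)-\Xi e^{mk}\xi(\theta_0)$, and then minimize $H$ by monotonicity first in $\theta$ and then in $\theta_0$, down to $H(k,k)=0$. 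You instead first reduce in $\theta_0$ by the obvious monotonicity of $\Psi_{\theta_0}$, then substitute the explicit closed forms for $\xi,\xi'$ and clear positive factors to arrive at the single-variable inequality $\Xi\cosh(\Xi u)+m\sinh(\Xi u)\ge\Xi e^{mu}$ --- which is exactly $H(\theta,k)\ge 0$ after dividing by $e^{m\theta}$ --- and finish with a clean Jensen argument. Your route is more computational but avoids the need to discover $\Lambda$; the paper's factorization makes the two-variable reduction more transparent and sidesteps the hyperbolic bookkeeping you flag as the main obstacle. As a side remark, your observation that the corner limit of $\Psi_k$ equals $r/\sigma^2>0$ is actually sharper than the paper's phrase ``the infimum is attained at $\theta=\theta_0=k$'', which strictly applies to $H$ rather than to the full infimand; in both arguments only the strict positivity on the open region is used downstream.
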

\begin{proof}
We begin by using (\ref{ode}) to rewrite the statement in the lemma as
\begin{multline*}\inf_{0<\theta<\theta_0<k}\bigg\{\left(\s\coth(\s(k-\theta))+\frac{\mu}{\sigma^2}\right)\xi^{'}(\theta)+\left((\s)^2\coth^2(\s(k-\theta))-\frac{\mu^2}{\sigma^4}\right)\xi(\theta)\\-\frac{(\s)^2}{\sinh^2(\s(k-\theta))}\xi(\theta_0)\bigg\}\ge0\end{multline*}
 By the strong Markov property of process $D_\cdot$, the function $\xi$ satisfies a more general version of \eqref{zd}. Specifically, for $0\le y_1, y_2<k$,
\begin{eqnarray}\label{xy}
\xi(y_2)&=&e^{\frac{\mu}{\sigma^2}(y_2-k)}\frac{\sinh(\s(y_2-y_1))}{\sinh(\s(k-y_1))}+e^{\frac{\mu}{\sigma^2}(y_2-y_1)}\frac{\sinh(\s(k-y_2))}{\sinh(\s(k-y_1))}\xi(y_1).
\end{eqnarray}
Define for $y\in[0,k)$,
\begin{eqnarray}\label{lambda}
  \Lambda(y)&=&\frac{e^{-\frac{\mu y }{\sigma^2}}\xi(y)}{\sinh(\s(k-y))}.
\end{eqnarray}
Then function $\Lambda(\cdot)$ satisfies (see (\ref{xy}))
\begin{eqnarray}
  \Lambda(y_2)-\Lambda(y_1)&=&\frac{e^{-\frac{\mu k}{\sigma^2}}\sinh(\s(y_2-y_1))}{\sinh(\s(k-y_1))\cdot\sinh(\s(k-y_2))},\quad \forall y_1,y_2\in[0,k),
\end{eqnarray}
from which we can easily obtain that
\begin{eqnarray}
\Lambda^{'}(y)&=&\frac{\s  e^{-\frac{\mu k}{\sigma^2}}}{\sinh^2(\s(k-y))}>0,\quad \forall y\in[0,k).
  \end{eqnarray}
Straightforward computation shows that
  \[\Lambda^{'}(y)=e^{-\frac{\mu y}{\sigma^2}}\frac{(\s\coth(\s(k-y))-\frac{\mu}{\sigma^2})\xi(y)+\xi^{'}(y)}{\sinh(\s(k-y))}>0, \quad \forall y\in[0,k).\]
  Thus,
  \begin{eqnarray}\label{zeta2lambda}\xi^{'}(y)=\Lambda^{'}(y)e^{\frac{\mu y}{\sigma^2}}\sinh(\s(k-y))-(\s\coth(\s(k-y))-\frac{\mu}{\sigma^2})\xi(y).\end{eqnarray}

Using (\ref{zeta2lambda}), the above inequality is equivalent to
\[\inf_{0<\theta<\theta_0<k}\bigg\{\Lambda^{'}(\theta)\bigg(e^{\frac{\mu\theta}{\sigma^2}}\left(\s\cosh(\s(k-\theta))+\frac{\mu}{\sigma^2}\sinh(\s(k-\theta))\right)-\s e^{\frac{\mu k}{\sigma^2}}\xi(\theta_0)\bigg)\bigg\}\ge 0.\]
Let us denote by
\[H(\theta,\theta_0)=e^{\frac{\mu\theta}{\sigma^2}}\left(\s\cosh(\s(k-\theta))+\frac{\mu}{\sigma^2}\sinh(\s(k-\theta))\right)-\s e^{\frac{\mu k}{\sigma^2}}\xi(\theta_0).\]
We will show that \[\inf_{0<\theta<\theta_0<k}H(\theta,\theta_0)\ge 0.\]
Notice that for $\theta\in[0,\theta_0]$,
\[\frac{\partial H}{\partial\theta}=-\frac{2r}{\sigma^2}e^{\frac{\mu\theta}{\sigma^2}}\sinh(\s(k-\theta))<0,\]
therefore
\[\inf_{0\le \theta\le\theta_0}H(\theta,\theta_0)=H(\theta_0,\theta_0).\]
Moreover,
\[\frac{\partial}{\partial\theta_0}H(\theta_0,\theta_0)=-\frac{2r}{\sigma^2}e^{\frac{\mu\theta_0}{\sigma^2}}\sinh(\s(k-\theta_0))-\s e^{\frac{\mu k}{\sigma^2}}\xi^{'}(\theta_0)<0.\]
As a result,
\[\inf_{0\le\theta\le\theta_0<k}H(\theta,\theta_0)=H(k,k)=0.\]
This completes the proof of Lemma \ref{lem1}.
\end{proof}
 Since our problem concerns $\theta < \theta_0<k$,   Lemma \ref{lem1} says $F^{'}(\theta)>0$ for $\theta \in (0, \theta_0)$, which confirms that there is at most one solution to equation $F(\theta)=0$. This concludes the uniqueness of smooth pasting point $\theta^*\in(0,\theta_0)$.

\subsection{Proof of Lemma \ref{lem2}}\label{sect-appx-gf}
\begin{proof}
Let us consider
\[J(y):=g(y;\theta^*)-\tilde{f}(y)=C\left(\beta(y)(\xi(\theta_0)-\xi(\theta^*))+\xi(y)-\xi(\theta_0)\right),~y\in[\theta^*,k).\]
We check its derivatives with respect to $x$:
\begin{eqnarray}
  J^{'}(y)&=&C\left(\beta^{'}(y)(\xi(\theta_0)-\xi(\theta^*))+\xi^{'}(y)\right),\\
  J^{''}(y)&=&C\left(\beta^{''}(y)(\xi(\theta_0)-\xi(\theta^*))+\xi^{''}(y)\right),
  \end{eqnarray}where
  \begin{eqnarray}\label{xi}
\beta(y)&=&\frac{g(y;\theta^*)}{f(\theta^*)}=e^{\frac{\mu}{\sigma^2}(y-\theta^*)}\frac{\sinh(\s(k-y))}{\sinh(\s(k-\theta^*))}, \quad  y\in(\theta^*,k).
  \end{eqnarray}

  Using probabilistic nature of function $\beta(\cdot)$ we know that it is positive and decreasing. Therefore, if $\mu\le 0$, we have
  \[\beta^{''}(y)=\frac{2\mu}{\sigma^2}\beta^{'}(y)+\frac{2r}{\sigma^2}\beta(y)>0~\Rightarrow J^{''}(y)\ge C\xi^{''}(y)>0.\]
    On the other hand, if $\mu>0$, from (\ref{xi}) we have
    \begin{eqnarray*}
      \beta^{'}(y)&=&\left(\frac{\mu}{\sigma^2}-\s\right)\beta(y)+\frac{\s e^{\frac{\mu}{\sigma^2}(y-\theta^*)-\s(k-y)}}{\sinh(\s(k-\theta^*))},\\
      \beta^{''}(y)&=&\frac{2\mu}{\sigma^2}\beta^{'}(y)+\frac{2r}{\sigma^2}\beta(y)=\left(\s-\frac{\mu}{\sigma^2}\right)^2\beta(y)+\frac{2\mu}{\sigma^2}\frac{\s e^{\frac{\mu}{\sigma^2}(y-\theta^*)-\s(k-y)}}{\sinh(\s(k-\theta^*))}>0,\\
      &\Rightarrow& J^{''}(y)\ge C\xi^{''}(y)>0.
      \end{eqnarray*}
      So in either case ($\mu\le 0$ or $\mu>0$), $J^{'}(\cdot)$ is an increasing function, and
      \[J^{'}(y)>J^{'}(\theta^*)=0,\quad\forall y\in(\theta^*,k),\]
which implies that
\[J(y)>J(\theta^*)=0,\quad \forall y\in(\theta^*,k).\]
This completes the proof.
\end{proof}

\end{document}